\newtheorem{lemma}{Lemma}
\newtheorem{theorem}{Theorem}
\newtheorem{corollary}{Corollary}
\newtheorem{remark}{Remark}
\newtheorem{definition}{Definition}
\newtheorem{proposition}{Proposition}
\newtheorem{example}{Example}
\newenvironment{proof}{\textbf{Proof:}}{\hfill$\square$}
\newcommand{\ie}{\textit{i.e.}}
\begin{document}

\title{State Feedback Control Design for Input-output Decoupling of Boolean Control Networks}

\author{Yiliang Li, Hongli Lyu, Jun-e Feng, and Abdelhamid Tayebi, \IEEEmembership{Fellow, IEEE}
\thanks{Corresponding authors: Jun-e Feng and Abdelhamid Tayebi.}
\thanks{Yiliang Li and Jun-e Feng are with School of Mathematics, Shandong University, Jinan, Shandong 250100, P.R. China (e-mail: liyiliang1994@126.com,fengjune@sdu.edu.cn).}
\thanks{Hongli Lyu and Abdelhamid Tayebi are with Department of Electrical Engineering, Lakehead University, Thunder Bay, Ontario, Canada (e-mail: \{hlyu1,atayebi\}@lakeheadu.ca).}
\thanks{This work was supported by the National Natural Science Foundation of China under the grant 62273201, the Research Fund for the Taishan Scholar Project of Shandong Province of China under the grant tstp20221103, and the National Sciences and Engineering Research Council of Canada (NSERC), under the grant NSERC-DG RGPIN 2020-06270.}}



\maketitle

\begin{abstract}
A state feedback control strategy is proposed for input-output (IO) decoupling of a class of fully output controllable Boolean control networks (BCNs). Some necessary and sufficient conditions for BCN IO-decoupling are presented. As an instrumental tool in our design, we introduce a canonical form for IO-decoupled BCNs along with some conditions guaranteeing its existence.
Finally, two numerical examples are provided to illustrate the effectiveness of the proposed approach.
\end{abstract}

\begin{IEEEkeywords}
Boolean control network; Input-output decoupling; State feedback control.
\end{IEEEkeywords}

\section{Introduction}\label{sec1}
In 1969, Kauffman suggested a general theory of metabolic behaviour, and studied the behaviour of large nets of randomly interconnected binary (on-off) genes \cite{SAKauffman1969}. These type of models referred to as Boolean networks (BNs), have been instrumental in the description and simulation of the behaviour of a variety of biological systems and physiological processes, such as genetic regulatory networks \cite{TIdenker2001}, cell growth, apoptosis and differentiation \cite{Huangsui2000}.
This type of discrete (sequential) dynamical systems, involving the interconnection of Boolean state variables, have attracted the attention of the control community due to their inherent challenges that required the development of appropriate tools to deal with this type of systems. The dazzling development of Boolean Control Networks (BCNs) is mainly attributed to the powerful algebraic framework developed by Cheng and co-authors \cite{Chengdaizhan2001}, that enabled the representation of these logical networks as linear state space representations with canonical state vectors. This allowed to recast most of the common systems and control problems into the BCN world (see, for instance, \cite{Zhuqunxi2018a,Jiayingzhe2022a,Liuyang2021a,Wangbiao2022a,Guoyuqian2022a,Lujianquan2018b,Zhourongpei2022a,Kobayashi2022,Yangxinrong2022,Lixi2023a,Gaoshuhua2022,Yaoyuhua2021a,Chenhongwei2020,Lifangfei2022a,Mengmin2021a,Liyiliang2022b,Lirui2021a,Chenqi2022}).
The Input-output (IO)-decoupling problem, also known as Morgan's problem \cite{MorganBS1964a}, is a popular topic that has also been recently dealt with within the BCN framework.

The IO-decoupling properties of BCNs were first characterized in \cite{ValcherME2017}, and vertex partitioning based necessary and sufficient conditions for IO-decoupling of BCNs were given in \cite{Liyifeng2022a}. The IO-decoupling definition of \cite{ValcherME2017} includes systems which are not fully output controllable. For instance, a system with two inputs $(u_1,u_2)$ and two outputs $(y_1,y_2)$, is considered IO-decoupled as per the definition of \cite{ValcherME2017} in the case where $y_1$ is controlled solely by $u_1$, and $y_2$ is not controllable by either $u_1$ or $u_2$. In the present work, we slightly modify the definition of \cite{ValcherME2017} to capture a meaningful class of fully output controllable systems where every output is controlled by a unique and distinct input. Consequently, the system in the previous illustrative example is not considered IO-decoupled as per our new definition.

In \cite{Fushihua2017}, a state feedback based IO-decoupling approach has been proposed for a particular class of BCNs that can be transformed into the IO-decomposed form, relying on solving a set of logical matrix equations. A preliminary result for solving logical matrix equations was provided in \cite{Panjinfeng2018} and used for the design of state feedback based IO-decoupling for this particular calss of BCNs. A reliable method for solving logical matrix equations was proposed in \cite{Yuyongyuan2019}. Unfortunately, the existing state feedback based IO-decoupling approaches are limited to the particular class of BCNs that can be transformed into the IO-decomposed form.
It is worth pointing out that not all BCNs can be transformed into the IO-decomposed form, and even if a BCN admits such a form, it is not a straightforward process finding it.

In the present paper, we provide some necessary and sufficient conditions, for a class of fully output controllable BCNs (that may not necessarily be transformable into an IO-decomposed form), guaranteeing the existence of a state feedback leading to a one-step\footnote{A one-step transition IO-decoupled system is defined in \cite{ValcherME2017} as an IO-decoupled system where each output at time $t+1$ is determined only by the input and output at time $t$.} transition IO-decoupled system. A constructive procedure is provided to design such a feedback, as long as the conditions are satisfied, such that the closed-loop system is IO-decoupled in the sense that every output is controlled by a unique and distinct input.
We also show that IO-decoupled BCNs, as per our new definition, can be written in a special form, referred to as the \textit{IO-decoupled canonical form}, which allows to obtain the IO-decomposed form of \cite{Fushihua2017,Panjinfeng2018} (if it exists) in a straightforward manner. The main contributions of this paper  as follows:
\begin{itemize}
  \item Two necessary and sufficient conditions for one-step transition IO-decoupling of a class of fully output controllable BCNs are derived.
  \item A new IO-decoupled canonical form, providing an explicit mapping between the decoupled inputs and outputs, is introduced.
  \item A new state feedback based approach, leading to a one-step transition IO-decoupling,  is proposed without requiring the existence of the IO-decomposed form.
\end{itemize}

The remainder of this paper is organized as follows.
Section \ref{sec2} provides the notations and the preliminaries used throughout the paper. Necessary and sufficient conditions for IO-decoupling, as well as a state feedback based IO-decoupling approach are presented in Section \ref{sec3}. A comparative analysis with respect to the existing work in the literature is given in Section \ref{sec4}.
Section \ref{sec5} provides two numerical examples illustrating the effectiveness of the proposed approach.
Finally, some concluding remarks are given in Section \ref{sec6}.

\section{Preliminaries}\label{sec2}
\subsection{Notations}\label{sec2.1}
The set of $m\times n$ real matrices is denoted by $\mathcal{M}_{m\times n}$.
The $i$-th column (row) of matrix $M$ is denoted by
$\mathrm{Col}_i(M)$ ($\mathrm{Row}_{i}(M)$).
The cardinality of a set $\Omega$ is denoted by  $|\Omega|$.
The set of $m\times n$ Boolean matrices is denoted by
$\mathcal{B}_{m\times n}:=\{B\in\mathcal{M}_{m\times n}|B_{ij}\in\mathcal{D}\}$, where $\mathcal{D}=\{0,1\}$.
The set of $m\times n$ logical matrices is denoted by $\mathcal{L}_{m\times n}:=\{L\in\mathcal{B}_{m\times n}|\mathrm{Col}_{i}(L)\in\Delta_{m},i=1,\ldots,n\}$, where $\Delta_{m}=\{\delta_{m}^{i}|\delta_{m}^{i}=\mathrm{Col}_{i}(I_{m}),i=1,\ldots,m\}$ and $I_{m}=\mathrm{diag}\{1,\ldots,1\},m\geq 2$.
We denote $\Delta$ by $\Delta_{2}$ and
the logical matrix $[\delta_{m}^{i_{1}}\ \cdots\ \delta_{m}^{i_{n}}]$ by $\delta_{m}[i_{1}\ \cdots\ i_{n}]$.
We define the scalar-valued function $\mathrm{sgn}(a)$, which returns $0$ if $a=0$ and $1$ if $a\neq 0$.
For a given matrix $M=[M_{ij}]_{p\times q}$, we define  $\mathrm{sgn}(M)=[\mathrm{sgn}(M_{ij})]_{p\times q}$.
We define $V(\Omega):=\delta_n^{i_1}+\delta_n^{i_2}+\cdots+\delta_n^{i_m}$, where  $\Omega=\{\delta_n^{i_1},\delta_n^{i_2},\ldots,\delta_n^{i_m}\}$. The Kronecker product and Hadamard product are denoted by $\otimes$ and $\circ$, respectively.
Given $k$ matrices  $M_{1},\ldots,M_{k}\in\mathcal{M}_{m\times n}$, the Hadamard product of $M_{1},\ldots, M_{k}$ is defined as $\mathop{\prod_{H}}\limits_{i=1\ \ }^{k\ \ }M_{i}=M_{1}\circ\cdots\circ M_{k}$.
The swap matrix with indices $n$ and $m$ is defined as
$W_{[n,m]}=[I_{m}\otimes\delta_{n}^{1}\ \cdots\ I_{m}\otimes\delta_{n}^{n}]$.
The $n$-dimensional vectors of ones and zeros are denoted by  $\mathbf{1}_{n}=\sum\limits_{i=1}^{n}\delta_{n}^{i}$ and $\mathbf{0}_{n}=[0\ \cdots\ 0]^{\top}$, respectively.

\subsection{Semi-tensor product of matrices}\label{subsec2.1}
Given two matrices $P\in\mathcal{M}_{m\times n}$ and $Q\in\mathcal{M}_{p\times q}$,
the semi-tensor product of $P$ and $Q$ is defined as
\[P\ltimes Q:=(P\otimes I_{t/n})(Q\otimes I_{t/p}),\]
where $t$ is the least common multiple of $n$ and $p$.
In this paper, the default matrix product is the semi-tensor product and the symbol $\ltimes$ is omitted.
Define a one-to-one mapping $\varphi$ from $\mathcal{D}$ to $\Delta$ such that  $\varphi(1)=\delta_{2}^{1}$ and $\varphi(0)=\delta_{2}^{2}$.
For any logical variable $X_{i}\in\mathcal{D},i=1,\ldots,n$ and any $n$-ary Boolean function $f(X_{1},\ldots,X_{n})$, one has the following vector forms:
\[
\varphi(X_{i})=\left[
\begin{array}{c}
X_{i}\\
1-X_{i}\\
\end{array}
\right]
\]
and
\[
\varphi(f(X_{1},\ldots,X_{n}))=\left[
\begin{array}{c}
f(X_{1},\ldots,X_{n})\\
1-f(X_{1},\ldots,X_{n})\\
\end{array}
\right]_{,}
\]
where $i=1,\ldots,n$.
The following lemma shows that for any Boolean function, there is a structure matrix such that its vector form can be expressed via a linear form.

\begin{lemma}\label{le1}\rm\cite{Chengdaizhan2011b}
For an $n-$ary Boolean function $f(X_{1},\ldots,X_{n})$, there exists a unique structure matrix $L_{f}\in\mathcal{L}_{2\times 2^{n}}$ such that the vector form of $f(X_{1},\ldots,X_{n})$ is expressed as
\[ \varphi(f(X_{1},\ldots,X_{n}))=L_{f}\ltimes_{i=1}^{n}x_{i},\]
where $x_{i}=\varphi(X_{i}),i=1,\ldots,n$.
\end{lemma}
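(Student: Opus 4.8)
The plan is to prove existence by constructing $L_f$ explicitly, column by column, and then to obtain uniqueness from the observation that the product $\ltimes_{i=1}^{n}x_{i}$ sweeps out every canonical basis vector of $\Delta_{2^{n}}$ as the truth values vary. The whole argument rests on identifying the bijection between truth assignments and the columns of the identity matrix $I_{2^{n}}$.

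First I would note that each $x_{i}=\varphi(X_{i})$ lies in $\Delta=\Delta_{2}$, equalling $\delta_{2}^{1}$ when $X_{i}=1$ and $\delta_{2}^{2}$ when $X_{i}=0$. Since in every successive product both factors are canonical vectors whose inner dimensions already match, each semi-tensor product collapses to an ordinary Kronecker product, so $\ltimes_{i=1}^{n}x_{i}=x_{1}\otimes\cdots\otimes x_{n}\in\Delta_{2^{n}}$. Writing $x_{i}=\delta_{2}^{k_{i}}$ with $k_{i}\in\{1,2\}$ and applying the identity $\delta_{2}^{k}\otimes\delta_{2}^{l}=\delta_{4}^{2(k-1)+l}$ repeatedly, one checks by induction that $\ltimes_{i=1}^{n}x_{i}=\delta_{2^{n}}^{j}$ with $j=1+\sum_{i=1}^{n}(k_{i}-1)2^{n-i}$. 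As $(X_{1},\ldots,X_{n})$ ranges over $\mathcal{D}^{n}$, the index $j$ ranges bijectively over $\{1,\ldots,2^{n}\}$; establishing this correspondence and its index formula is the one place where care is needed.

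For existence, I would define $L_{f}$ so that its $j$-th column records the value of $f$ at the assignment associated with $\delta_{2^{n}}^{j}$. Concretely, for each $j$ let $(X_{1}^{(j)},\ldots,X_{n}^{(j)})$ be the unique assignment with $\ltimes_{i=1}^{n}\varphi(X_{i}^{(j)})=\delta_{2^{n}}^{j}$, and set $\mathrm{Col}_{j}(L_{f}):=\varphi(f(X_{1}^{(j)},\ldots,X_{n}^{(j)}))$. Because $\varphi$ takes values in $\Delta$, every column lies in $\Delta_{2}$, so $L_{f}\in\mathcal{L}_{2\times 2^{n}}$. Then for an arbitrary assignment with associated index $j$ one has $L_{f}\ltimes_{i=1}^{n}x_{i}=L_{f}\delta_{2^{n}}^{j}=\mathrm{Col}_{j}(L_{f})=\varphi(f(X_{1},\ldots,X_{n}))$, which is exactly the claimed linear form.

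For uniqueness, I would suppose that $L_{f},\tilde{L}_{f}\in\mathcal{L}_{2\times 2^{n}}$ both satisfy the identity for every assignment. Evaluating at the assignment corresponding to $\delta_{2^{n}}^{j}$ and using $L\delta_{2^{n}}^{j}=\mathrm{Col}_{j}(L)$ yields $\mathrm{Col}_{j}(L_{f})=\mathrm{Col}_{j}(\tilde{L}_{f})$ for each $j\in\{1,\ldots,2^{n}\}$; since this holds for all $j$, it follows that $L_{f}=\tilde{L}_{f}$. Thus the main obstacle is purely the bookkeeping behind the Kronecker-product bijection, and once that is settled both existence and uniqueness are immediate.
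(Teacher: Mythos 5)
Your proof is correct. Note that the paper gives no proof of this lemma at all --- it is imported verbatim from \cite{Chengdaizhan2011b} --- so there is no in-paper argument to compare against; your column-by-column construction of $L_f$ indexed by the bijection between truth assignments and the canonical vectors $\delta_{2^{n}}^{j}$, followed by uniqueness from the fact that $\ltimes_{i=1}^{n}x_{i}$ realizes every $\delta_{2^{n}}^{j}$, is exactly the standard argument from that monograph, and the index formula $j=1+\sum_{i=1}^{n}(k_{i}-1)2^{n-i}$ is right. One small inaccuracy worth fixing: your stated reason that the successive semi-tensor products collapse to Kronecker products because ``the inner dimensions already match'' is not correct --- for two $2\times 1$ column vectors the relevant dimensions are $1$ and $2$, which do \emph{not} match. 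The conclusion $x_{1}\ltimes x_{2}=x_{1}\otimes x_{2}$ is nevertheless true, but for the opposite reason: the mismatch triggers the lifting in the definition, giving $x_{1}\ltimes x_{2}=(x_{1}\otimes I_{2})(x_{2}\otimes I_{1})=(x_{1}\otimes I_{2})(1\otimes x_{2})=x_{1}\otimes x_{2}$ by the mixed-product property of the Kronecker product. With that justification repaired, the existence construction and the uniqueness step both go through exactly as you state them.
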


In addition, the following lemma is necessary for the introduction of BCN algebraic forms.

\begin{lemma}\label{le2}\rm\cite{Chengdaizhan2011b}
Assume
\[
\left\{
\begin{array}{c}
y=M_{y}\ltimes_{i=1}^{n}x_{i},\\
z=M_{z}\ltimes_{i=1}^{n}x_{i},\\
\end{array}
\right.
\]
where $x_{i},y,z\in\Delta,i=1,2,\ldots,n,M_{y},M_{z}\in\mathcal{L}_{2\times 2^{n}}$.
Then
\[
yz=(M_{y}\ast M_{z})\ltimes_{i=1}^{n}x_{i},
\]
where
$M_{y}\ast M_{z}=[\mathrm{Col}_{1}(M_{y})\otimes \mathrm{Col}_{1}(M_{z})\ \cdots\ \mathrm{Col}_{2^{n}}(M_{y})\otimes \mathrm{Col}_{2^{n}}(M_{z})].$
\end{lemma}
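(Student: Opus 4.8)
The plan is to exploit the fact that the semi-tensor product of canonical vectors is again a canonical vector, which reduces the identity to a simple column-selection argument. First I would set $x := \ltimes_{i=1}^{n} x_i$. Since each $x_i \in \Delta$, a basic property of the semi-tensor product guarantees that $x \in \Delta_{2^n}$, i.e.\ $x = \delta_{2^n}^{j}$ for a unique index $j$ determined by the $x_i$'s. With this notation the hypotheses become $y = M_y x$ and $z = M_z x$, so that $y = \mathrm{Col}_{j}(M_y)$ and $z = \mathrm{Col}_{j}(M_z)$ by the column-selecting action of a canonical vector.

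Next I would compute $yz$ directly. Because $y,z \in \Delta$ are $2\times 1$ vectors, the semi-tensor product collapses to the ordinary Kronecker product, $yz = y \otimes z$; this follows from the definition of $\ltimes$ with $t = \mathrm{lcm}(1,2) = 2$ together with the identity $(y \otimes I_2)z = y \otimes z$. Substituting the column expressions yields $yz = \mathrm{Col}_{j}(M_y) \otimes \mathrm{Col}_{j}(M_z)$, which is exactly $\mathrm{Col}_{j}(M_y \ast M_z)$ by the definition of the product $M_y \ast M_z$. Finally, since $\mathrm{Col}_{j}(M_y \ast M_z) = (M_y \ast M_z)\delta_{2^n}^{j} = (M_y \ast M_z)x$, I obtain $yz = (M_y \ast M_z)\ltimes_{i=1}^{n} x_i$, as claimed.

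The one point that requires care is that the passage through column selection is precisely what makes the identity valid. For a general column vector $v$ one has $(M_y v)\otimes(M_z v) = \sum_{i,k} v_i v_k\,\mathrm{Col}_{i}(M_y)\otimes\mathrm{Col}_{k}(M_z)$, whereas $(M_y \ast M_z)v = \sum_{i} v_i\,\mathrm{Col}_{i}(M_y)\otimes\mathrm{Col}_{i}(M_z)$; these two expressions coincide only because the cross terms $v_i v_k$ with $i \neq k$ vanish and $v_i^2 = v_i$ when $v = x$ is a canonical unit vector. Hence the crux of the argument, which I would flag as the main obstacle, is establishing that $x = \ltimes_{i=1}^{n} x_i$ is genuinely a single canonical column $\delta_{2^n}^{j}$; the remaining manipulations are routine bookkeeping with the semi-tensor and Kronecker product conventions.
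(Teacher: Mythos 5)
Your proof is correct. Note that the paper does not prove this lemma at all: it is quoted from the cited reference (Cheng, Qi and Zhao's monograph on the semi-tensor product), so there is no in-paper argument to compare against. Your derivation is the standard one: reduce $\ltimes_{i=1}^{n}x_i$ to a single canonical vector $\delta_{2^n}^{j}$, observe that multiplication by a canonical vector selects a column, use $yz=y\otimes z$ for column vectors, and match the result with $\mathrm{Col}_j(M_y\ast M_z)$. Your closing remark correctly identifies the essential point, namely that the identity $(M_yv)\otimes(M_zv)=(M_y\ast M_z)v$ holds precisely because $v$ is a canonical unit vector (so the cross terms vanish and $v_j^2=v_j$); this is exactly why the lemma is stated for $x_i\in\Delta$ rather than for arbitrary vectors.
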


\subsection{Algebraic forms of BCNs}\label{subsec2.2}
Consider the following BCN:
\begin{equation}\label{e1}
\left\{
\begin{array}{l}
X_{i}(t+1)=f_{i}(X(t),U(t)),\\
Y_{j}(t)=g_{j}(X(t)),\\
\end{array}
\right.
\end{equation}
where $X(t)=(X_{1}(t),\ldots,X_{n}(t)), U(t)=(U_{1}(t),\ldots,$
$U_{m}(t))$ and $Y(t)=(Y_{1}(t),\ldots,Y_{p}(t))$ are the state, input and output of the system, respectively, $f_{i}$ and $g_{j}$ are Boolean functions.

Let $x_{i}=\varphi(X_{i}),u_{i}=\varphi(U_{i})$ and $y_{i}=\varphi(Y_{i})$.
From Lemma \ref{le1} and Lemma \ref{le2}, BCN (\ref{e1}) is transformed into the following algebraic form:
\begin{equation}\label{e2}
\left\{
\begin{array}{l}
x(t+1)=Lu(t)x(t),\\
y_{i}(t)=H_{i}x(t),\\
\end{array}
\right.
\end{equation}
where $x(t)=\ltimes_{i=1}^{n}x_{i}(t)\in\Delta_{2^{n}}$ and $u(t)=\ltimes_{i=1}^{m}u_{i}(t)\in\Delta_{2^{m}}$ are vector forms of $X(t)$ and $U(t)$, respectively,
$y_{i}(t)\in\Delta$, $L\in\mathcal{L}_{2^{n}\times2^{m+n}}$ and $H_{i}\in\mathcal{L}_{2\times 2^{n}},i=1,\ldots,p$.

\section{Main results}\label{sec3}
In this section, necessary and sufficient conditions for BCN IO-decoupling are derived. Thereafter, the canonical form for an IO-decoupled BCN is established. Finally, a state feedback control design approach is proposed for BCN IO-decoupling.

\subsection{Characterization of the IO-decoupling}\label{subsec3.1}

The one-step transition IO-decoupling is defined as follows.

\begin{definition}\label{def2}\rm
BCN (\ref{e2}) with inputs and outputs having the same dimension (\ie, $m=p$) is said to be one-step transition IO-decoupled if for every $t \in \mathbb{Z}_+$, every index $i\in\{1,\ldots,m\}$, and every pair of states $x(t),\hat{x}(t)\in\Delta_{2^{n}}$ satisfying $y_i(t)=\hat{y}_i(t)$, with $y_i(t):=H_ix(t)$ and $\hat{y}_i(t):=H_i\hat{x}(t)$, one has
\[
y_i(t+1)=\hat{y}_i(t+1)~\iff~u_i(t)=\hat{u}_i(t)\\
\]
for every pair of inputs $u,\hat{u}\in\Delta_{2^{m}}$, where $u_i$ and $\hat{u}_i$ are the $i$-th entries of $u$ and $\hat{u}$ respectively.
\end{definition}

\begin{remark}\rm\label{re3.1.1}
Note that Definition \ref{def2} is slightly different from the definition in \cite{ValcherME2017}. The difference is that the one-step transition IO-decoupling property in Definition \ref{def2} imposes that every output is controlled by a unique input.
\end{remark}

In the sequel, we assume that the inputs and outputs of BCN \eqref{e2} have the same dimension (\ie, $m=p$).
We include all states with $y_{i}=\delta_{2}^{j}$ into the set $\Gamma_{ij}=\{x|H_{i}x=\delta_{2}^{j}\}$,
and all inputs with $u_{i}=\delta_{2}^{j}$ into the set $\Omega_{ij}=\{u|u=\delta_{2^{i-1}}^{k_{1}}\ltimes\delta_{2}^{j}\ltimes\delta_{2^{m-i}}^{k_{2}},k_{1}\in\{1,\ldots,2^{i-1}\},k_{2}\in\{1,\ldots,2^{m-i}\}\}$, where $i\in\{1,\ldots,m\}$ and $j=1,2$.

According to Definition \ref{def2}, for a given state $x\in\Delta_{2^{n}}$, a one-step transition IO-decoupled BCN satisfies:
\[
\left\{
\begin{array}{ll}
H_{i}Lux=H_{i}L\hat{u}x,&\mathrm{if}\ u,\hat{u}\in\Omega_{ik},\\
H_{i}Lux\neq H_{i}L\hat{u}x,&\mathrm{if}\ u\in\Omega_{i1},\hat{u}\in\Omega_{i2},\\
\end{array}
\right.
\]
where $i\in\{1,\ldots,m\},k=1,2$.
As a result of $H_{i}Lux,H_{i}L\hat{u}x\in\Delta$, one has
\[
\left\{
\begin{array}{l}
H_{i}LV(\Omega_{ik})x\in\{2^{m-1}\delta_{2}^{1},2^{m-1}\delta_{2}^{2}\},\\
H_{i}LV(\Omega_{i1})x\neq H_{i}LV(\Omega_{i2})x,\\
\end{array}
\right.
\]
where $i\in\{1,\ldots,m\},k=1,2$.
On the other hand, Definition \ref{def2} implies that for all inputs $u\in\Delta_{2^{m}}$ with the same $u_{i}$, $y_i=H_iLux$ remains unchanged  for any state $x\in\Gamma_{ij},j=1,2$,
which requires
\[
\sum\limits_{x\in\Gamma_{ij}}H_{i}LV(\Omega_{ik})x\in\{2^{m-1}|\Gamma_{ij}|\delta_{2}^{1},2^{m-1}|\Gamma_{ij}|\delta_{2}^{2}\},
\]
where $j,k=1,2$.
Therefore, we obtain  \[\mathrm{sgn}(\sum\limits_{x\in\Gamma_{ij}}H_{i}LV(\Omega_{ik})x)\in\Delta,j,k=1,2.\]
In light of the above analysis, if a given BCN is one-step transition IO-decoupled, then for each index $i\in\{1,\ldots,m\}$, the following conditions hold:
\begin{equation}\label{e3.1.3}
\left\{
\begin{array}{l}
\mathrm{sgn}(\sum\limits_{x\in\Gamma_{ij}}H_{i}LV(\Omega_{ik})x)\in\Delta,\\
\mathrm{sgn}(\sum\limits_{x\in\Gamma_{ij}}H_{i}LV(\Omega_{i1})x)
\neq\mathrm{sgn}(\sum\limits_{x\in\Gamma_{ij}}H_{i}LV(\Omega_{i2})x),
\end{array}
\right.
\end{equation}
where $j,k=1,2$.

Next, we construct such a $4\times 2$ Boolean matrix $\Xi^{i}$ expressed as follows:
\begin{align}\label{e3}
\Xi^{i}=\left[
\begin{array}{cc}
\Xi^{i}_{11}&\Xi^{i}_{12}\\
\Xi^{i}_{21}&\Xi^{i}_{22}\\
\end{array}
\right]
\end{align}
where $\Xi^{i}_{jk}=\mathrm{sgn}(\sum\limits_{x\in\Gamma_{ij}}H_{i}LV(\Omega_{ik})x),j,k=1,2$ and $i\in\{1,\ldots,m\}$.
For each index $i\in\{1,\ldots,m\}$ and $j,k\in\{1,2\}$, $\Xi^{i}_{jk}$ contains all possible values of $y_i(t+1)$ when
$u_{i}(t)=\delta_{2}^{k}$ and $y_{i}(t)=\delta_{2}^{j}$.
Combined with \eqref{e3.1.3}, if a given BCN is one-step transition IO-decoupled, then for each index $i\in\{1,\ldots,m\}$, the constructed Boolean matrix $\Xi^{i}$ satisfies
\begin{equation}\label{e3.1.4}
\left\{
\begin{array}{l}
(\delta_{2}^{j})^{\top}\Xi^{i}\delta_{2}^{k}\in\Delta,\\
(\delta_{2}^{j})^{\top}\Xi^{i}\delta_{2}^{1}\neq(\delta_{2}^{j})^{\top}\Xi^{i}\delta_{2}^{2},\\
\end{array}
\right.
\end{equation}
where $j,k=1,2$.

Based on the analysis above, a necessary and sufficient condition for one-step transition IO-decoupling is proposed using the constructed matrices $\Xi^{1},\ldots,\Xi^{m}$.

\begin{theorem}\label{th1}\rm
BCN (\ref{e2}), with inputs and outputs having the same dimension (\ie, $m=p$), is one-step transition IO-decoupled, as per Definition \ref{def2}, if and only if for each index $i\in\{1,\ldots,m\}$, the constructed matrix $\Xi^{i}$ satisfies
\begin{equation}\label{e3.1.5}
(\delta_{2}^{j})^{\top}\Xi^{i}\mathbf{1}_{2}=\mathbf{1}_{2},j=1,2.
\end{equation}
\end{theorem}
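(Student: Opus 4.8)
The plan is to prove the equivalence in two stages: first unpack Definition~\ref{def2} into two elementary one-step conditions, and then show these are captured exactly by \eqref{e3.1.5} through the matrices $\Xi^i$. Fixing an index $i$ and a common output value $\delta_2^j$ (so that the states $x,\hat{x}$ range over $\Gamma_{ij}$), the biconditional in Definition~\ref{def2} splits into two assertions: (A) whenever $u_i=\hat{u}_i$, i.e. $u,\hat{u}$ lie in the same class $\Omega_{ik}$, the next output $y_i(t+1)$ is unchanged; and (B) whenever $u_i\neq\hat{u}_i$, i.e. $u\in\Omega_{i1}$ and $\hat{u}\in\Omega_{i2}$, the next output $y_i(t+1)$ changes. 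Since $L$ and $H_i$ are time-invariant and Definition~\ref{def2} quantifies over all states and inputs, the $t$-dependence is immaterial and (A)--(B) are conditions on the single-step map $(x,u)\mapsto H_iLux$ alone.

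Next I would translate (A) and (B) into the blocks $\Xi^i_{jk}$. The key observation is that $H_iLV(\Omega_{ik})x=\sum_{u\in\Omega_{ik}}H_iLux$ is a sum of canonical vectors in $\Delta$, one per input with $u_i=\delta_2^k$, and summing further over $x\in\Gamma_{ij}$ keeps every entry a nonnegative integer. Consequently, applying $\mathrm{sgn}$ recovers precisely the set of output values attained as $(x,u)$ ranges over $\Gamma_{ij}\times\Omega_{ik}$: no cancellation can occur, so $\Xi^i_{jk}\in\Delta$ exactly when $y_i(t+1)$ is constant on $\Gamma_{ij}\times\Omega_{ik}$ (this is condition (A) for the class $k$), while $\Xi^i_{jk}=\mathbf{1}_2$ signals that both values occur. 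Thus (A) is equivalent to the first line of \eqref{e3.1.4}, and, given (A), condition (B) amounts to $\Xi^i_{j1}\neq\Xi^i_{j2}$, i.e. the second line of \eqref{e3.1.4}. This reproves the necessity sketched before the theorem and, read backwards, yields sufficiency: \eqref{e3.1.4} forces (A) and (B), hence decoupling.

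It then remains to identify \eqref{e3.1.4} with the compact criterion \eqref{e3.1.5}. Interpreting the products through the semi-tensor convention, $(\delta_2^j)^\top\Xi^i$ selects the block row $[\Xi^i_{j1}\ \Xi^i_{j2}]$ and right-multiplication by $\mathbf{1}_2$ gives $(\delta_2^j)^\top\Xi^i\mathbf{1}_2=\Xi^i_{j1}+\Xi^i_{j2}$, so \eqref{e3.1.5} reads $\Xi^i_{j1}+\Xi^i_{j2}=\mathbf{1}_2$. If \eqref{e3.1.4} holds, the two blocks are the distinct elements $\delta_2^1$ and $\delta_2^2$ of $\Delta$, whose sum is $\mathbf{1}_2$. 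The converse is where I expect the only real subtlety: $\mathbf{1}_2$ also decomposes as $\mathbf{1}_2+\mathbf{0}_2$, so a priori \eqref{e3.1.5} need not force $\Xi^i_{jk}\in\Delta$. This degenerate split is ruled out by the structural fact that $\Xi^i_{j1}$ and $\Xi^i_{j2}$ are built from the \emph{same} index set $\Gamma_{ij}$: both equal $\mathbf{0}_2$ when $\Gamma_{ij}=\emptyset$ and both are nonzero otherwise, since summing the nonnegative vectors $H_iLV(\Omega_{ik})x$ over a nonempty $\Gamma_{ij}$ yields a strictly positive total count. Hence $\Xi^i_{j2}\neq\mathbf{0}_2$ under \eqref{e3.1.5}, which excludes the $\mathbf{1}_2+\mathbf{0}_2$ case and forces both blocks into $\Delta$ and distinct, i.e. \eqref{e3.1.4}. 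Assembling the three equivalences over all $i$ gives the theorem; the single delicate point is this exclusion of the degenerate decomposition of $\mathbf{1}_2$, which hinges on the coupling of the two blocks in a row through the common set $\Gamma_{ij}$.
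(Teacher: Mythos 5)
Your proof is correct and follows essentially the same route as the paper's: both reduce Definition~\ref{def2} to the block conditions \eqref{e3.1.4} and then establish \eqref{e3.1.4}$\iff$\eqref{e3.1.5} by ruling out the degenerate decomposition $\mathbf{1}_2=\mathbf{1}_2+\mathbf{0}_2$ through the observation that no block $\Xi^i_{jk}$ can vanish. Your treatment is marginally more careful than the paper's in making explicit that the non-vanishing of the blocks rests on $\Gamma_{ij}\neq\emptyset$, which \eqref{e3.1.5} itself guarantees.
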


\begin{proof}
(Necessity) If BCN (\ref{e2}) is one-step transition IO-decoupled as per Definition \ref{def2}, then one has  (\ref{e3.1.4}).
It follows that, for each $j=1,2$, if $(\delta_2^j)^{\top}\Xi^i\delta_2^1=\delta_2^1$, then $(\delta_2^j)^{\top}\Xi^i\delta_2^2=\delta_2^2$,
otherwise, $(\delta_2^j)^{\top}\Xi^i\delta_2^2=\delta_2^1$.
Thus,
\[
\begin{array}{rcl}
(\delta_{2}^{j})^{\top}\Xi^{i}\mathbf{1}_{2}&=&(\delta_{2}^{j})^{\top}\Xi^{i}(\delta_{2}^{1}+\delta_{2}^{2})\\
\ &=&(\delta_{2}^{j})^{\top}\Xi^{i}\delta_{2}^{1}+(\delta_{2}^{j})^{\top}\Xi^{i}\delta_{2}^{2}\\
\ &=&\mathbf{1}_{2},
\end{array}
\]
where $j=1,2$.

(Sufficiency) Since (\ref{e3.1.5}) is equivalent to $(\delta_{2}^{j})^{\top}\Xi^{i}\delta_{2}^{1}$
$+(\delta_{2}^{j})^{\top}\Xi^{i}\delta_{2}^{2}$
$=\mathbf{1}_{2}$, one of three following cases may occur:
\begin{enumerate}
  \item $(\delta_{2}^{j})^{\top}\Xi^{i}\delta_{2}^{k}\in\Delta$ and $(\delta_{2}^{j})^{\top}\Xi^{i}\delta_{2}^{1}\neq(\delta_{2}^{j})\Xi^{i}\delta_{2}^{2}$,
  \item $(\delta_{2}^{j})^{\top}\Xi^{i}\delta_{2}^{1}=\mathbf{1}_{2}$ and $(\delta_{2}^{j})^{\top}\Xi^{i}\delta_{2}^{2}=\mathbf{0}_{2}$,
  \item $(\delta_{2}^{j})^{\top}\Xi^{i}\delta_{2}^{1}=\mathbf{0}_{2}$ and $(\delta_{2}^{j})^{\top}\Xi^{i}\delta_{2}^{2}=\mathbf{1}_{2}$.
\end{enumerate}
According to the construction of $\Xi^{i}$, it is impossible to obtain  \[(\delta_{2}^{j})^{\top}\Xi^{i}\delta_{2}^{k}=\mathbf{0}_{2},\]
where $j,k=1,2$.
Thus, (\ref{e3.1.4}) can be inferred from (\ref{e3.1.5}).
For each index $i\in\{1,\ldots,m\}$ and every pair of states $x,\hat{x}\in\Delta_{2^{n}}$ satisfying $H_ix=H_i\hat{x}$ (assume $H_ix=H_i\hat{x}=\delta_2^j$), if $u,\hat{u}\in\Omega_{ik},k=1,2$, then $H_iLux=(\delta_2^j)^{\top}\Xi^i\delta_2^k=H_iL\hat{u}\hat{x}$.
If $u\in\Omega_{i1},\hat{u}\in\Omega_{i2}$, then one has $H_iLux=(\delta_2^j)^{\top}\Xi^{i}\delta_2^1$ and $H_iL\hat{u}\hat{x}=(\delta_2^j)^{\top}\Xi^{i}\delta_2^2$, which means $H_iLux\neq H_iL\hat{u}\hat{x}$.
Therefore, BCN (\ref{e2}) is one-step transition IO-decoupling as per Definition \ref{def2}.
\end{proof}

It is worth pointing out that if a BCN is IO-decoupled as per Definition \ref{def2}, then an explicit mapping between outputs and inputs can be given. To show that, let us consider $u_{i}(t)=\delta_{2}^{k},k=1,2$ and $y_{i}(t)=\delta_{2}^{j},j=1,2$. By virtue of Theorem \ref{th1}, one has $(\delta_2^j)^{\top}\Xi^i\delta_2^k\in\Delta$ since (\ref{e3.1.4}) and  (\ref{e3.1.5}) are equivalent. Consequently,
\[
\begin{array}{rcl}
y_{i}(t+1)&=&(\delta_{2}^{j})^{\top}\Xi^{i}\delta_{2}^{k}\\
\ &=&(y_{i}(t))^{\top}\Xi^{i}u_{i}(t)\\ &=&[(\delta_{2}^{1})^{\top}\Xi^{i}\ (\delta_{2}^{2})^{\top}\Xi^{i}]y_{i}(t)u_{i}(t).\\
\end{array}
\]

\begin{corollary}\label{pro01}\rm
BCN (\ref{e2}), with inputs and outputs having the same dimension (\ie, $m=p$), is one-step transition IO-decoupled, as per Definition \ref{def2},
if and only if for each index $i\in\{1,\ldots,m\}$, the mapping between $y_{i}(t+1)$ and $u_{i}(t),y_{i}(t)$ can be expressed as follows:
\begin{equation}\label{e01}
y_{i}(t+1)=[(\delta_{2}^{1})^{\top}\Xi^{i}\ (\delta_{2}^{2})^{\top}\Xi^{i}]y_{i}(t)u_{i}(t),
\end{equation}
where $\Xi^i$ satisfies (\ref{e3.1.5}).
\end{corollary}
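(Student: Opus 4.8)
The plan is to derive this corollary directly from Theorem~\ref{th1} together with the short algebraic manipulation that precedes its statement, rather than re-examining Definition~\ref{def2} from scratch. The backbone is the equivalence, established inside the proof of Theorem~\ref{th1}, between conditions~(\ref{e3.1.4}) and~(\ref{e3.1.5}); the first line of (\ref{e3.1.4}) in particular guarantees $(\delta_2^j)^\top\Xi^i\delta_2^k\in\Delta$ for all $j,k\in\{1,2\}$, so that every quantity entering (\ref{e01}) is a genuine canonical vector. I would therefore treat $\Xi^i$ throughout as the matrix constructed in (\ref{e3}), which is fixed by the data $(L,H_i)$ of the BCN, and split the argument into the two implications.

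For necessity, I would assume BCN~(\ref{e2}) is one-step transition IO-decoupled. The derivation preceding Theorem~\ref{th1} then already supplies condition~(\ref{e3.1.4}); in particular $\Xi^i_{jk}=(\delta_2^j)^\top\Xi^i\delta_2^k\in\Delta$, and for each fixed $x\in\Gamma_{ij}$ and $u\in\Omega_{ik}$ the value $H_iLux$ equals this common canonical vector. The substantive observation is thus that the update $y_i(t+1)=H_iLu(t)x(t)$, which a priori depends on the whole of $u(t)$ and $x(t)$, collapses to a function of $y_i(t)$ and $u_i(t)$ alone, namely $y_i(t+1)=(\delta_2^j)^\top\Xi^i\delta_2^k$ whenever $y_i(t)=\delta_2^j$ and $u_i(t)=\delta_2^k$. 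Reading the displayed chain just above the corollary from bottom to top, the semi-tensor identity $[(\delta_2^1)^\top\Xi^i\ (\delta_2^2)^\top\Xi^i]\,(\delta_2^j\ltimes\delta_2^k)=(\delta_2^j)^\top\Xi^i\delta_2^k$ then rewrites this value in the linear form (\ref{e01}); that $\Xi^i$ satisfies (\ref{e3.1.5}) is exactly the conclusion of Theorem~\ref{th1}.

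For sufficiency, I would assume (\ref{e01}) holds with $\Xi^i$ satisfying (\ref{e3.1.5}) and argue that Theorem~\ref{th1} applies verbatim. Because the left-hand side of (\ref{e01}) lies in $\Delta$, each column $(\delta_2^j)^\top\Xi^i\delta_2^k$ of the structure matrix must itself lie in $\Delta$; combined with (\ref{e3.1.5}), which forces $(\delta_2^j)^\top\Xi^i\delta_2^1+(\delta_2^j)^\top\Xi^i\delta_2^2=\mathbf{1}_2$, this recovers both lines of (\ref{e3.1.4}), i.e. membership in $\Delta$ and distinctness of the $k=1$ and $k=2$ columns. Since (\ref{e3.1.5}) is precisely the hypothesis of Theorem~\ref{th1}, the BCN is one-step transition IO-decoupled.

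The only genuine obstacle I anticipate is the well-definedness claim underlying necessity: one must verify carefully that $H_iLu(t)x(t)$ is independent of the components of $u(t)$ and $x(t)$ other than $u_i(t)$ and $y_i(t)$, and that this independence is legitimately inherited from the decoupling hypothesis through the sets $\Omega_{ik}$ and $\Gamma_{ij}$ rather than silently assumed. Everything else, including the $\Delta$-membership bookkeeping and the index matching $\delta_2^j\ltimes\delta_2^k=\delta_4^{2(j-1)+k}$ that drives the linear rewriting, is routine once Theorem~\ref{th1} is in hand.
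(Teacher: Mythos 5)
Your proposal is correct and follows essentially the same route as the paper: the paper justifies the corollary by the displayed computation immediately preceding it (necessity of the form \eqref{e01}, using $(\delta_2^j)^{\top}\Xi^i\delta_2^k\in\Delta$ from the equivalence of \eqref{e3.1.4} and \eqref{e3.1.5}) together with Theorem~\ref{th1} applied to the hypothesis that $\Xi^i$ satisfies \eqref{e3.1.5} for sufficiency. Your added caution that \eqref{e01} alone would not suffice without \eqref{e3.1.5} is exactly the point the paper itself makes in Remark~\ref{re5}.
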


\begin{remark}\label{re5}\rm
It is worth noting that $(\delta_{2}^{j})^{\top}\Xi^{i}\mathbf{1}_2=\mathbf{1}_2$ implies that $(\delta_{2}^{j})^{\top}\Xi^{i}\delta_{2}^{k}\in\Delta$ and $(\delta_{2}^{j})^{\top}\Xi^{i}\delta_{2}^{1}\neq(\delta_{2}^{j})^{\top}\Xi^{i}\delta_{2}^{2}$.
The former one illustrates the fact that the output $y_i$ is unaffected by inputs $u_1,\ldots,u_{i-1},u_{i+1},\ldots,u_{m}$.
The latter one shows that the output $y_i$ is uniquely controlled by $u_i$.
Unfortunately, the condition $(\delta_{2}^{j})^{\top}\Xi^{i}\delta_{2}^{1}\neq(\delta_{2}^{j})^{\top}\Xi^{i}\delta_{2}^{2}$ is not reflected in (\ref{e01}), and as such, \eqref{e01} does not imply that the system is one-step transition IO-decoupled as per Definition \ref{def2}.
However, as it is going to be shown later, it is a necessary and sufficient condition for the one-step transition IO-decoupling as per the definition in  \cite{ValcherME2017}.
\end{remark}

\subsection{The canonical form}\label{subsec3.3}

Next, we will provide a canonical form for the class of one-step transition IO-decoupled BCNs. Let us assume that BCN (\ref{e2}) is one-step transition IO-decoupled, and let $z_{i}(t)=y_{i}(t),i\in\{1,\ldots,m\}$.
According to Corollary \ref{pro01}, one has
\begin{equation}\label{e3.1.1}
\left\{
\begin{array}{l}
z_{i}(t+1)=[(\delta_{2}^{1})^{\top}\Xi^{i}\ (\delta_{2}^{2})^{\top}\Xi^{i}]W_{[2,2]}u_{i}(t)z_{i}(t),\\
y_{i}(t)=z_{i}(t),\\
\end{array}
\right.
\end{equation}
where $i\in\{1,\ldots,m\}$.
BCN \eqref{e3.1.1} is referred to as the one-step transition IO-decoupled canonical form if (\ref{e3.1.5}) holds for $i\in\{1,\ldots,m\}$.

From \eqref{e3.1.1}, one can notice that
for each index $i\in\{1,\ldots,m\}$, each initial state $z_{i}(0)$ can evolve to any state within $\Delta$ under an appropriate control sequence.
Due to the independence of the $z_{i}$ subsystems, one can conclude that each initial state $\bar{z}(0)=\ltimes_{i=1}^{m}z_{i}(0)\in\Delta_{2^{m}}$ can evolve to any state within $\Delta_{2^{m}}$ under an appropriate control sequence. Therefore, system (\ref{e3.1.1}) is controllable. For a BCN that possesses a one-step transition IO-decoupled canonical form \eqref{e3.1.1}, one has $\bar{z}(t)=\ltimes_{i=1}^{m}z_i(t)=\ltimes_{i=1}^{m}y_i(t)=(H_1\ast\cdots\ast H_m)x(t)$ due to the fact that $z_i(t)=y_i(t),i\in\{1,\ldots,m\}$.
Therefore, for each initial state $x(0)\in\Delta_{2^{n}}$, a unique $\bar{z}(0)$ can be found such that $\bar{z}(0)=(H_1\ast\cdots\ast H_m)x(0)$ holds. This indicates that each initial state $x(0)\in\Delta_{2^{n}}$ can evolve to any output within $\Delta_{2^{m}}$ under an appropriate control sequence, which means that the BCN under consideration is output controllable.

In the following, we use the one-step transition IO-decoupled canonical form to obtain the IO-decomposed form of \cite{Fushihua2017,Panjinfeng2018} in a straightforward manner.

Consider BCN \eqref{e2} that possesses a one-step transition IO-decoupled canonical form \eqref{e3.1.1}.
Take $z_{m+1}(t)=P_{m+1}x(t)$, where $z_{m+1}(t)\in\Delta_{2^{n-m}}$ and  $P_{m+1}\in\mathcal{L}_{2^{n-m}\times2^{n}}$.
Define $z(t):=\ltimes_{i=1}^{m+1}z_{i}(t)$, where $z(t)\in\Delta_{2^{n}}$. It follows that $z(t)=(H_{1}\ast\cdots\ast H_{m}\ast P_{m+1})x(t):=Px(t)$,
where  $P=H_{1}\ast\cdots\ast H_{m}\ast P_{m+1}\in\mathcal{L}_{2^{n}\times 2^{n}}$.
If $P$ is nonsingular, then BCN (\ref{e2}) can be transformed into the following IO-decomposed form:
\begin{equation}\label{e3.1.2}
\left\{
\begin{array}{l}
z_{i}(t+1)=[(\delta_{2}^{1})^{\top}\Xi^{i}\ (\delta_{2}^{2})^{\top}\Xi^{i}]W_{[2,2]}u_{i}(t)z_{i}(t),\\
z_{m+1}(t+1)=P_{m+1}L(I_{2^{m}}\otimes P^{\top})u(t)z(t),\\
y_{i}(t)=z_{i}(t),\\
\end{array}
\right.
\end{equation}
where $i\in\{1,\ldots,m\}$.
The following theorem provides necessary and sufficient conditions under which  a one-step transition IO-decoupled BCN admits an IO-decomposed form.

\begin{theorem}\label{th5}\rm
If BCN (\ref{e2}), with inputs and outputs having the same dimension (\ie, $m=p$), is one-step transition IO-decoupled, as per Definition \ref{def2},  then it has an IO-decomposed form (\ref{e3.1.2}) if and only if
\[
(H_{1}\ast\cdots\ast H_{m})\mathbf{1}_{2^{n}}=2^{n-m}\mathbf{1}_{2^{m}}.
\]
\end{theorem}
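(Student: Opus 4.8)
The plan is to reduce the existence of the IO-decomposed form \eqref{e3.1.2} to the nonsingularity of the coordinate-change matrix $P=H_{1}\ast\cdots\ast H_{m}\ast P_{m+1}$, and then to translate this nonsingularity into the stated counting condition through the column structure of the $\ast$-product. Writing $\bar{H}:=H_{1}\ast\cdots\ast H_{m}\in\mathcal{L}_{2^{m}\times 2^{n}}$, I would first recall from the construction preceding the theorem that \eqref{e3.1.2} is obtained from \eqref{e2} precisely through the state change of coordinates $z=Px$: the first $m$ blocks are the canonical form \eqref{e3.1.1}, while the last block follows from $x=P^{\top}z$ together with the identity $(I_{2^{m}}\otimes P^{\top})u(t)z(t)=u(t)P^{\top}z(t)$. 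This substitution is admissible exactly when $P\in\mathcal{L}_{2^{n}\times 2^{n}}$ is nonsingular, \ie\ when $P$ is a permutation matrix (equivalently, its $2^{n}$ columns are distinct and hence exhaust $\Delta_{2^{n}}$). Thus the theorem amounts to showing that a suitable $P_{m+1}\in\mathcal{L}_{2^{n-m}\times 2^{n}}$ rendering $P$ nonsingular exists if and only if $\bar{H}\mathbf{1}_{2^{n}}=2^{n-m}\mathbf{1}_{2^{m}}$.

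The key structural fact, obtained by iterating Lemma \ref{le2}, is that $\mathrm{Col}_{k}(P)=\mathrm{Col}_{k}(\bar{H})\otimes\mathrm{Col}_{k}(P_{m+1})$ for each $k$. Combined with the identification $\delta_{2^{m}}^{s}\otimes\delta_{2^{n-m}}^{w}=\delta_{2^{n}}^{(s-1)2^{n-m}+w}$, this turns the assignment $k\mapsto(\mathrm{Col}_{k}(\bar{H}),\mathrm{Col}_{k}(P_{m+1}))$ into a pairing valued in $\Delta_{2^{m}}\times\Delta_{2^{n-m}}$, under which $P$ is a permutation matrix if and only if these $2^{n}$ pairs exhaust $\Delta_{2^{m}}\times\Delta_{2^{n-m}}$ exactly once. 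The quantity controlling this is $N_{s}:=|\{k:\mathrm{Col}_{k}(\bar{H})=\delta_{2^{m}}^{s}\}|$, the number of states mapped by $\bar{H}$ to $\delta_{2^{m}}^{s}$; observe that $N_{s}$ is exactly the $s$-th entry of $\bar{H}\mathbf{1}_{2^{n}}$, so the counting condition reads $N_{s}=2^{n-m}$ for all $s$.

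For necessity I would argue that if such a $P_{m+1}$ exists, the $2^{n}$ pairs above cover $\Delta_{2^{m}}\times\Delta_{2^{n-m}}$ once each, so each first coordinate $\delta_{2^{m}}^{s}$ appears exactly $2^{n-m}$ times, \ie\ $N_{s}=2^{n-m}$ and $\bar{H}\mathbf{1}_{2^{n}}=2^{n-m}\mathbf{1}_{2^{m}}$. For sufficiency I would proceed constructively: assuming $N_{s}=2^{n-m}$ for every $s$, I would, within each group of indices $k$ for which $\mathrm{Col}_{k}(\bar{H})=\delta_{2^{m}}^{s}$, assign to $\mathrm{Col}_{k}(P_{m+1})$ the $2^{n-m}$ distinct labels $\delta_{2^{n-m}}^{1},\ldots,\delta_{2^{n-m}}^{2^{n-m}}$; the resulting columns of $P$ then realize each $\delta_{2^{n}}^{(s-1)2^{n-m}+w}$ exactly once as $(s,w)$ ranges over all pairs, so $P$ is a permutation matrix and \eqref{e3.1.2} follows.

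The main obstacle I anticipate is the bookkeeping in the sufficiency direction, where the equidistribution $N_{s}=2^{n-m}$ must simultaneously guarantee that each group admits enough distinct labels (there are exactly as many labels as indices) and that no column value $\delta_{2^{n}}^{r}$ is produced by two different groups; verifying that the constructed $P_{m+1}$ is a well-defined logical matrix and that $P$ is genuinely bijective is the crux, and it is precisely here that the hypothesis is used in full.
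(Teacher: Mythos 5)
Your proposal is correct and follows essentially the same route as the paper: the paper's sets $\Lambda_j$ are exactly your groups $\{k:\mathrm{Col}_k(\bar H)=\delta_{2^m}^j\}$, its bijection $\psi(\delta_{2^n}^{\lambda_{jk}})=\delta_{2^m}^{j}\ltimes\delta_{2^{n-m}}^{k}$ is precisely your within-group labeling of $\mathrm{Col}_k(P_{m+1})$, and its necessity argument is the same count of states sharing a given value of $(H_1\ast\cdots\ast H_m)x$. The only cosmetic difference is that you phrase everything through the column identity $\mathrm{Col}_k(P)=\mathrm{Col}_k(\bar H)\otimes\mathrm{Col}_k(P_{m+1})$ while the paper works with the maps $\psi,\psi_1,\psi_2$; the content is identical.
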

\begin{proof}
(Sufficiency) Since BCN (\ref{e2}) is one-step transition IO-decoupled as per Definition \ref{def2}, it can be transformed into the canonical form (\ref{e3.1.1}).
Let $\Lambda_{j}=\{x|(H_{1}\ast\cdots\ast H_{m})x=\delta_{2^{m}}^{j}\},j\in\{1,\ldots,2^{m}\}$.
Then, $\Lambda_{i}\cap\Lambda_{j}=\emptyset,i\neq j$ and $\Lambda_{1}\cup\cdots\cup\Lambda_{2^{m}}=\Delta_{2^{n}}$.
Due to the fact that $(H_{1}\ast\cdots\ast H_{m})\mathbf{1}_{2^{n}}=2^{n-m}\mathbf{1}_{2^{m}}$, one has $|\Lambda_{j}|=2^{n-m}$, $j\in\{1,\ldots,2^{m}\}$.
For $j\in\{1,\ldots,2^{m}\}$, denote $|\Lambda_{j}|=\{\delta_{2^{n}}^{\lambda_{j1}},\ldots,\delta_{2^{n}}^{\lambda_{j|\Lambda_{j}|}}\}$.
Define a one-to-one mapping $\psi$ from $\Delta_{2^{n}}$ to $\Delta_{2^{n}}$ such that
$\psi(\delta_{2^{n}}^{\lambda_{jk}})=\delta_{2^{m}}^{j}\ltimes\delta_{2^{n-m}}^{k}$, where $\delta_{2^{n}}^{\lambda_{jk}}\in\Lambda_{j},j\in\{1,\ldots,2^{m}\}$ and $k\in\{1,\ldots,2^{n-m}\}$.
It is easy to express $\psi(x)$ as $\psi(x)=Px$, where $P\in\mathcal{L}_{2^{n}\times 2^{n}}$ is nonsingular.
In addition, two surjections can be obtained from $\psi$:
\begin{enumerate}
\item $\psi_1$ is a surjection from $\Delta_{2^{n}}$ to $\Delta_{2^{m}}$ such that $\psi_1(\delta_{2^{n}}^{\lambda_{jk}})=\delta_{2^{m}}^{j}$, where $\delta_{2^{n}}^{\lambda_{jk}}\in\Lambda_{j},j\in\{1,\ldots,2^{m}\}$ and $k\in\{1,\ldots,2^{n-m}\}$.
\item $\psi_2$ is a surjection from $\Delta_{2^{n}}$ to $\Delta_{2^{n-m}}$ such that $\psi_2(\delta_{2^{n}}^{\lambda_{jk}})=\delta_{2^{n-m}}^{k}$, where $\delta_{2^{n}}^{\lambda_{jk}}\in\Lambda_{j},j\in\{1,\ldots,2^{m}\}$ and $k\in\{1,\ldots,2^{n-m}\}$.
\end{enumerate}
One can also express $\psi_{i}(x),i=1,2$ as $\psi_{1}(x)=(H_{1}\ast\cdots\ast H_{m})x$ and $\psi_{2}(x)=P_{m+1}x$, where $P_{m+1}\in\mathcal{L}_{2^{n-m}\times 2^{n}}$.
For each $j\in\{1,\ldots,2^{m}\}$ and $k\in\{1,\ldots,2^{n-m}\}$, if $x=\delta_{2^{n}}^{\lambda_{jk}}\in\Lambda_{j}$, then one has
\[
\begin{array}{rcl}
\psi(x)&=&\delta_{2^{m}}^{j}\ltimes\delta_{2^{n-m}}^{k}\\
\ &=&\psi_{1}(x)\ltimes\psi_{2}(x)\\
\ &=&((H_{1}\ast\cdots\ast H_{m})x)\ltimes(P_{m+1}x)\\
\ &=&(H_{1}\ast\cdots\ast H_{m}\ast P_{m+1})x.\\
\end{array}
\]
This implies $P=H_{1}\ast\cdots\ast H_{m}\ast P_{m+1}$.
Taking $z=\psi(x)=Px$, the IO-decomposed form (\ref{e3.1.2}) is obtained.

(Necessity) If BCN (\ref{e2}) has an IO-decomposed form (\ref{e3.1.2}), then $P=H_{1}\ast\cdots\ast H_{m}\ast P_{m+1}$ is nonsingular.
Defining $\psi_{1}(x)=(H_{1}\ast\cdots\ast H_{m})x$ and $\psi_{2}(x)=P_{m+1}x$, one has $z=Px=\psi_{1}(x)\ltimes\psi_{2}(x)$.
Assume $z=\delta_{2^{n}}^{\alpha}\in\Delta_{2^{n}}$.
Due to the uniqueness of  $\delta_{2^{n}}^{\alpha}=\delta_{2^{m}}^{\alpha_{1}}\ltimes\delta_{2^{n-m}}^{\alpha_{2}}$,
$\psi_{1}$ and $\psi_{2}$ are surjections from $\Delta_{2^{n}}$ to $\Delta_{2^{m}}$ and $\Delta_{2^{n-m}}$ respectively.
For each possible value of $\psi_{1}(x)$ (assume $\psi_{1}(x)=\delta_{2^{m}}^{\alpha_{1}}$),
taking $\psi_{2}(x)$ from $\delta_{2^{n-m}}^{1}$ to $\delta_{2^{n-m}}^{2^{n-m}}$,
$2^{n-m}$ different $z$ can be found.
It follows that, there are $2^{n-m}$ states with $\psi_{1}(x)=\delta_{2^{m}}^{\alpha_{1}}$. 
In other words, $2^{n-m}$ columns of $H_{1}\ast\cdots\ast H_{m}$ equal to $\delta_{2^{m}}^{\alpha_{1}}$.
Due to the arbitrariness of $\delta_{2^{n}}^{\alpha_{1}}$, it can be concluded that $(H_{1}\ast\cdots\ast H_{m})\mathbf{1}_{2^{n}}=2^{n-m}\mathbf{1}_{2^{m}}$.
\end{proof}

It is worth pointing out that, in view of Theorem \ref{th5}, not all one-step transition IO-decoupled BCNs can be transformed into an IO-decomposed form, which exists when $(H_{1}\ast\cdots\ast H_{m})\mathbf{1}_{2^{n}}=2^{n-m}\mathbf{1}_{2^{m}}$ is satisfied.

\subsection{State feedback control design}\label{subsec3.2}
In this section, we consider a BCN which is not IO-decoupled and provide some condition under which there exists a state feedback that decouples the BCN. In the sequel, we do not require that the inputs and outputs have the same dimension.

Consider the following state feedback with a new input $v(t)$:
\begin{equation}\label{e3.2.1}
u(t)=Kx(t)v(t),
\end{equation}
where $K\in\mathcal{L}_{2^{m}\times 2^{n+p}}$.
Substituting (\ref{e3.2.1}) into BCN (\ref{e2}), one obtains the following closed-loop BCN with a new input $v(t)$:
\begin{equation}\label{e3.2.2}
\left\{
\begin{array}{l}
x(t+1)=L(Kx(t)v(t))x(t),\\
y_{i}(t)=H_{i}x(t),\\
\end{array}
\right.
\end{equation}
where $i\in\{1,\ldots,p\}$.
The designed state feedback aims at making the BCN (\ref{e3.2.2}) one-step transition IO-decoupled.

For each output node $y_{i}$, a new Boolean matrix $\widehat{\Xi}^{i}$ with the same meaning as $\Xi^{i}$ is written as
\begin{align}\label{e3.3.5}
\widehat{\Xi}^{i}=\left[
\begin{array}{cc}
\widehat{\Xi}^{i}_{11}&\widehat{\Xi}^{i}_{12}\\
\widehat{\Xi}^{i}_{21}&\widehat{\Xi}^{i}_{22}\\
\end{array}
\right]_{,}
\end{align}
where $\widehat{\Xi}^{i}_{lk}=\mathrm{sgn}(\sum\limits_{x\in\Gamma_{il}}H_{i}L(KxV(\hat{\Omega}_{ik}))x),l,k=1,2,i\in\{1,\ldots,p\}$ and all inputs with $v_{i}=\delta_{2}^{j},i\in\{1,\ldots,p\}$ are included in the set $\hat{\Omega}_{ij}=\{v|v=\delta_{2^{i-1}}^{k_{1}}\ltimes\delta_{2}^{j}\ltimes\delta_{2^{p-i}}^{k_{2}},k_{1}\in\{1,\ldots,2^{i-1}\},$
$k_{2}\in\{1,\ldots,2^{p-i}\}\},j=1,2$.
Similarly, $\widehat{\Xi}^{i}_{lk}$ contains all possible values of $y_{i}(t+1)$ when $v_{i}(t)=\delta_{2}^{k}$ and $y_{i}(t)=\delta_{2}^{l}$.
According to Theorem \ref{th1}, BCN (\ref{e3.2.2}) is one-step transition IO-decoupled if and only if for each index $i\in\{1,\ldots,p\}$, the new Boolean matrix $\widehat{\Xi}^{i}$ satisfies
\begin{equation}\label{e3.3.4}
(\delta_{2}^{l})^{\top}\widehat{\Xi}^{i}\mathbf{1}_{2}=\mathbf{1}_{2},l=1,2.
\end{equation}
That is, the designed state feedback matrix $K$ should result in
\begin{equation}\label{e3.3.3}
\left\{
\begin{array}{l}
(\delta_{2}^{l})^{\top}\widehat{\Xi}^{i}\delta_{2}^{k}\in\Delta,\\
(\delta_{2}^{l})^{\top}\widehat{\Xi}^{i}\delta_{2}^{1}\neq(\delta_{2}^{l})^{\top}\widehat{\Xi}^{i}\delta_{2}^{2},\\
\end{array}
\right.
\end{equation}
where $i\in\{1,\ldots,p\},k,l=1,2$.
Therefore, we design the state feedback matrix $K$ such that $y_{i}(t+1)=H_{i}L(Kx(t)v(t))x(t)=v_{i}(t)$.
Based on this, for each index $i\in\{1,\ldots,p\}$, an auxiliary matrix, denoted by $\hat{K}_{i}\in\mathcal{L}_{2\times 2^{n+m}}$, is constructed as follows:
\begin{equation}\label{e3.2.4}
\mathrm{Col}_{k_{\alpha}}(\hat{K}_{i}\delta_{2^{n}}^{\alpha})=\delta_{2}^{j}\ \mathrm{if}\ \mathrm{sgn}(\sum\limits_{\delta_{2^{n}}^{\alpha}\in\Gamma_{il}}H_{i}L\delta_{2^{m}}^{k_{\alpha}}\delta_{2^{n}}^{\alpha})=\delta_{2}^{j},
\end{equation}
where $\alpha\in\{1,\ldots,2^{n}\},j=1,2$.
If the state feedback control (\ref{e3.2.1}) leads to $y_i(t+1)=v_i(t),i\in\{1,\ldots,p\}$, then for each state $x\in\Delta_{2^{n}}$, there is an input $v$ with $v_i=\delta_2^j$ and a column of $\hat{K}_ix$ with $\mathrm{Col}_{k_{\alpha}}(\hat{K}_ix)=\delta_2^j$  such that $\delta_{2^{m}}^{k_{\alpha}}=Kxv$.
Thus, for each index $i\in\{1,\ldots,p\}$ and each state $x\in\Delta_{2^{n}}$, let
\begin{equation}\label{e3.3.1}
KxV(\hat{\Omega}_{ij})=(\mathrm{Row}_{j}(\hat{K}_{i}x))^{\top},
\end{equation}
where $j=1,2$.
BCN (\ref{e3.2.2}) is one-step transition IO-decoupled if no zero column exists in $K$ generated from (\ref{e3.3.1}).
Consequently, a necessary and sufficient condition for one-step transition IO-decoupling via state feedback is provided in the following theorem:

\begin{theorem}\label{th3}\rm
BCN (\ref{e2}) is one-step transition IO-decoupled (as per Definition \ref{def2}) under the state feedback control (\ref{e3.2.1}) if and only if for any $\alpha\in\{1,\ldots,2^{n}\}$, one has
\begin{equation}\label{e3.2.3}
\left\{
\begin{array}{l}
\mathrm{sgn}(\hat{K}_{i}\delta_{2^{n}}^{\alpha}\mathbf{1}_{2^{m}})=\mathbf{1}_{2},\\
\mathop{\prod\nolimits_{H}}\limits_{i=1\ \ }^{p\ \ }((\hat{K}_{i}\delta_{2^{n}}^{\alpha})^{\top}\delta_{2}^{\eta_{i}})\neq\mathbf{0}_{2^{m}},\\
\end{array}
\right.
\end{equation}
where $\eta_{i}=1,2,i\in\{1,\ldots,p\}$.
\end{theorem}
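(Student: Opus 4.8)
The plan is to reduce the statement to Theorem~\ref{th1} applied to the closed-loop BCN \eqref{e3.2.2}, whose behaviour is encoded by the matrices $\widehat{\Xi}^i$. By the construction of $\widehat{\Xi}^i$, decoupling is equivalent to \eqref{e3.3.4}, hence to \eqref{e3.3.3}, so it suffices to show that a feedback gain $K$ realizing \eqref{e3.3.3} exists if and only if \eqref{e3.2.3} holds at every state. The guiding observation is that $\hat{K}_i\delta_{2^n}^\alpha$ is exactly the open-loop output-response map at state $\delta_{2^n}^\alpha$: its $k_\alpha$-th column records the value $H_iL\delta_{2^m}^{k_\alpha}\delta_{2^n}^\alpha$ of $y_i(t+1)$ produced by the open-loop input $\delta_{2^m}^{k_\alpha}$. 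With this reading, the first part of \eqref{e3.2.3} says that at state $\delta_{2^n}^\alpha$ the output $y_i$ can be driven to either value of $\Delta$, while $(\hat{K}_i\delta_{2^n}^\alpha)^\top\delta_2^{\eta_i}$ is the indicator, over open-loop inputs, of those inputs yielding $y_i(t+1)=\delta_2^{\eta_i}$; the Hadamard product in the second part is therefore the indicator of inputs that \emph{simultaneously} realize the output pattern $(\eta_1,\ldots,\eta_p)$, and its non-vanishing means every such pattern is achievable in one step at state $\delta_{2^n}^\alpha$.

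For sufficiency, I would build $K$ through \eqref{e3.3.1} so as to force $y_i(t+1)=v_i(t)$ for every $i$. For a fixed state $x=\delta_{2^n}^\alpha$ and each new input $v$, the column $Kxv$ must be an open-loop input lying in the intersection $\bigcap_{i=1}^{p}\{u:\ y_i(t+1)=\delta_2^{v_i}\text{ at }x\}$; the second condition of \eqref{e3.2.3} guarantees this intersection is non-empty, so each column of $Kx$ can be assigned and no zero column arises. With $y_i(t+1)=v_i(t)$ enforced, each value $y_i(t+1)$ contributing to $\widehat{\Xi}^i_{lk}$ equals $\delta_2^k$, whence $\widehat{\Xi}^i_{lk}=\delta_2^k$ and \eqref{e3.3.4} holds for $l=1,2$; Theorem~\ref{th1} then yields that \eqref{e3.2.2} is one-step transition IO-decoupled.

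For necessity, I would start from a feedback $K$ that decouples \eqref{e3.2.2}, so by Theorem~\ref{th1} the matrices $\widehat{\Xi}^i$ satisfy \eqref{e3.3.3}. The key point is that $\widehat{\Xi}^i_{lk}=\mathrm{sgn}\big(\sum_{x\in\Gamma_{il}}\sum_{v:\,v_i=\delta_2^k}y_i(t+1)\big)\in\Delta$ can hold only if all the summands coincide; hence for every individual state $x\in\Gamma_{il}$ and every $v$ with $v_i=\delta_2^k$ the value $y_i(t+1)$ is a fixed $\delta_2^{c_i(l,k)}$, and the inequality in \eqref{e3.3.3} forces $c_i(l,1)\neq c_i(l,2)$. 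Thus at any fixed state $\delta_{2^n}^\alpha$ the map $v_i\mapsto y_i(t+1)$ is a bijection independent of the other components of $v$; letting $v$ range over $\Delta_{2^p}$ then realizes all $2^p$ output patterns, each via the open-loop input $u=Kxv$. This gives the second part of \eqref{e3.2.3}, and fixing a single index recovers the first part.

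The step I expect to be the crux is exactly this passage in the necessity proof: the matrices $\widehat{\Xi}^i$ only encode behaviour aggregated over the whole class $\Gamma_{il}$, whereas \eqref{e3.2.3} is a per-state statement, so I must argue that the constraint $\widehat{\Xi}^i_{lk}\in\Delta$ collapses the aggregate to a pointwise identity and thereby transfers joint controllability to each individual state. Once that bridge is in place, the translation from feedback-reachable patterns to open-loop-reachable patterns is immediate since every $Kxv$ is itself an admissible open-loop input, and the redundancy of the first condition of \eqref{e3.2.3}, which follows from the second by fixing one index, is only a matter of bookkeeping.
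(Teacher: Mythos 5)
Your proposal is correct and follows essentially the same route as the paper's proof: both directions pass through Theorem~\ref{th1} and the matrices $\widehat{\Xi}^{i}$, the sufficiency constructs $K$ so that $y_i(t+1)=v_i(t)$ using the non-vanishing Hadamard product to guarantee a valid column of $Kx$ for every $v$, and the necessity exploits the fact that the $\mathrm{sgn}$ of a sum of unit vectors lies in $\Delta$ only when all summands coincide, then matches each output pattern $(\eta_1,\ldots,\eta_p)$ with the input pattern $(\bar\eta_1,\ldots,\bar\eta_p)$ that realizes it. Your per-state reading of $\hat{K}_i$ (column $k$ of $\hat{K}_i\delta_{2^n}^{\alpha}$ recording $H_iL\delta_{2^m}^{k}\delta_{2^n}^{\alpha}$) is the one the paper actually uses in its examples and argument, and your explicit collapse from class-aggregated to pointwise identities is the same bridge the paper crosses in its necessity step.
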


\begin{proof}
(Necessity) If BCN (\ref{e2}) is one-step transition IO-decoupled under the state feedback control (\ref{e3.2.1}), \ie, BCN (\ref{e3.2.2}) is one-step transition IO-decoupled, then for each index $i\in\{1,\ldots,p\}$, the constructed Boolean matrix $\widehat{\Xi}^{i}$ satisfies (\ref{e3.3.4}), which means that (\ref{e3.3.3}) holds for $k,l=1,2$.
For a given index $i\in\{1,\ldots,p\}$,
assume \[(\delta_{2}^{l})^{\top}\widehat{\Xi}^{i}\delta_{2}^{k}=\mathrm{sgn}(\sum\limits_{\delta_{2^{n}}^{\alpha}\in\Gamma_{il}}H_{i}L(K\delta_{2^{n}}^{\alpha}V(\hat{\Omega}_{ik}))\delta_{2^{n}}^{\alpha})=\delta_{2}^{j}.\]
For each state $x\in\Delta_{2^{n}}$
(assume $x=\delta_{2^{n}}^{\alpha}$) and all possible $\delta_{2^{m}}^{k_{\alpha}},\ k_{\alpha}\in\{1,\ldots,2^{m}\}$ satisfying $(\delta_{2^{m}}^{k_{\alpha}})^{\top}(K\delta_{2^{n}}^{\alpha}V(\hat{\Omega}_{ik}))\neq 0$, one has
$\mathrm{sgn}(\sum\limits_{\delta_{2^{n}}^{\alpha}\in\Gamma_{il}}H_iL\delta_{2^{m}}^{k_{\alpha}}\delta_{2^{n}}^{\alpha})=\delta_2^j$, which means that $\mathrm{Col}_{k_{\alpha}}(\hat{K}_{i}\delta_{2^{n}}^{\alpha})=\delta_{2}^{j}$.
It follows that,  $\mathrm{sgn}(\hat{K}_{i}\delta_{2^{n}}^{\alpha}\mathbf{1}_{2^{m}})=\mathbf{1}_{2}$ holds for $\alpha\in\{1,\ldots,2^{n}\},i\in\{1,\ldots,p\}$ due to the fact that $(\delta_{2}^{l})^{\top}\widehat{\Xi}^{i}\delta_{2}^{1}\neq (\delta_{2}^{l})^{\top}\widehat{\Xi}^{i}\delta_{2}^{2}$.
For each state $x\in\Delta_{2^{n}}$ (assume $x=\delta_{2^{n}}^{\alpha}\in\Gamma_{il_i},i\in\{1,\ldots,p\}$), let $\mathrm{Col}_{\bar{\eta}}(Kx)=\delta_{2^{m}}^{k_{\alpha}}$ and $\delta_{2^{p}}^{\bar{\eta}}=\ltimes_{i=1}^{p}\delta_{2}^{\bar{\eta}_{i}}$.
Then for each index $i\in\{1,\ldots,p\}$, one has $\mathrm{sgn}(\sum\limits_{\delta_{2^{n}}^{\alpha}\in\Gamma_{il_i}}H_iL\delta_{2^{m}}^{k_{\alpha}}\delta_{2^{n}}^{\alpha})=\delta_{2}^{\eta_i}$ and $\mathrm{Col}_{k_{\alpha}}(\hat{K}_i\delta_{2^{n}}^{\alpha})=\delta_{2}^{\eta_i}$.
As a result, $$(\delta_{2^{m}}^{k_{\alpha}})^{\top}(\mathop{\prod\nolimits_{H}}\limits_{i=1\ \ }^{p\ \ }((\hat{K}_{i}\delta_{2^{n}}^{\alpha})^{\top}\delta_{2}^{\eta_{i}}))=1.$$
Since any set of $\eta_1,\ldots,\eta_p$ has a set of $\bar{\eta}_1,\ldots,\bar{\eta}_p$ corresponding to it, one concludes that $\mathop{\prod\nolimits_{H}}\limits_{i=1\ \ }^{p\ \ }((\hat{K}_{i}\delta_{2^{n}}^{\alpha})^{\top}\delta_{2}^{\eta_{i}})\neq\mathbf{0}_{2^{m}}$ holds for $\eta_i\in\{1,2\},i\in\{1,\ldots,p\}$.

(Sufficiency) If (\ref{e3.2.3}) holds, then the state feedback matrix $K$ can be designed as follows.
For each input $v(t)=\delta_{2^{p}}^{\eta}$ composed of $v_{i}(t)=\delta_{2}^{\eta_{i}},i\in\{1,\ldots,p\}$, take  \[\mathrm{Col}_{\eta}(Kx(t))=\delta_{2^{m}}^{k}\ \mathrm{if}\ \mathrm{Row}_{k}(\mathop{\prod\nolimits_{H}}\limits_{i=1\ \ }^{p\ \ }((\hat{K}_{i}x(t))^{\top}\delta_{2}^{\eta_{i}}))=1.\]
For each index $i\in\{1,\ldots,p\}$ and any given state $x=\delta_{2^{n}}^{\alpha}$, one has $K\delta_{2^{n}}^{\alpha}V(\hat{\Omega}_{ij})\circ(\mathrm{Row}_{j}(\hat{K}_{i}\delta_{2^{n}}^{\alpha}))^{\top}\neq\mathbf{0}_{2^{m}}.$
In view of (\ref{e3.2.4}), it follows that \[\mathrm{sgn}(\sum\limits_{\delta_{2^{n}}^{\alpha}\in\Gamma_{il}}H_{i}L(K\delta_{2^{n}}^{\alpha}V(\hat{\Omega}_{ij}))\delta_{2^{n}}^{\alpha})=\delta_{2}^{j},j,l=1,2,\]
which means $(\delta_{2}^{l})^{\top}\widehat{\Xi}^{i}\delta_{2}^{j}\in\Delta,j,l=1,2$.
Using the fact that $\mathrm{sgn}(\hat{K}_{i}\delta_{2^{n}}^{\alpha}\mathbf{1}_{2^{m}})=\mathbf{1}_{2},\alpha\in\{1,\ldots,2^{n}\}$, one has \[(\delta_{2}^{l})^{\top}\widehat{\Xi}^{i}\delta_{2}^{1}\neq(\delta_{2}^{l})^{\top}\widehat{\Xi}^{i}\delta_{2}^{2},l=1,2.\]
As a result, for each index $i\in\{1,\ldots,p\}$, the constructed Boolean matrix $\widehat{\Xi}^{i}$ satisfies (\ref{e3.3.4}).
BCN (\ref{e3.2.2}) is one-step transition IO-decoupled as per Theorem \ref{th1}, \ie, BCN (\ref{e2}) is one-step transition IO-decoupled under the state feedback control (\ref{e3.2.1}).
\end{proof}

A $2^{m}\times 2^{n+p}$ Boolean matrix $\hat{K}$ is constructed by
\begin{equation}\label{e3.3.2}
\mathrm{Col}_{\eta}(\hat{K}\delta_{2^{n}}^{\alpha})=\mathop{\prod\nolimits_{H}}\limits_{i=1\ \ }^{p\ \ }(\hat{K}_{i}\delta_{2^{n}}^{\alpha})^{\top}\delta_{2}^{\eta_{i}},
\end{equation}
where $\alpha\in\{1,\ldots,2^{n}\},\eta\in\{1,\ldots,2^{p}\}$ and $\delta_{2^{p}}^{\eta}=\ltimes_{i=1}^{p}\delta_{2}^{\eta_{i}}$. According to Theorem \ref{th3}, if $\mathrm{sgn}(\hat{K}_{i}\delta_{2^{n}}^{\alpha}\mathbf{1}_{2^{m}})=\mathbf{1}_{2}$ holds for   $\alpha\in\{1,\ldots,2^{n}\},i\in\{1,\ldots,p\}$, then the state feedback matrix $K$ exists if and only if \[\mathrm{Col}_{i}(\hat{K})\neq\mathbf{0}_{2^{m}}\]
holds for $i\in\{1,\ldots,2^{n+p}\}$.
The state feedback matrix $K$ can be designed as follow:
\[
\mathrm{Col}_{\eta}(K)\in\{\delta_{2^{m}}^{k}|[\hat{K}]_{k\eta}=1,k\in\{1,\ldots,2^{m}\}\},
\]
where $\eta\in\{1,\ldots,2^{n+p}\}$.

For each $\alpha\in\{1,\ldots,2^{n}\}$, one has $\mathrm{Col}_{j}(\hat{K}\delta_{2^{n}}^{\alpha})\circ\mathrm{Col}_{k}(\hat{K}\delta_{2^{n}}^{\alpha})=\mathbf{0}_{2^{m}},j\neq k$ from  $((\hat{K}_{i}\delta_{2^{n}}^{\alpha})^{\top}\delta_{2}^{1})\circ((\hat{K}_{i}\delta_{2^{n}}^{\alpha})^{\top}\delta_{2}^{2})=\mathbf{0}_{2^{m}},i\in\{1,\ldots,p\}$.
Thus, for each state $x\in\Delta_{2^{n}}$ (assume $x=\delta_{2^{n}}^{\alpha}$), $2^{p}$
disjoint combinations
$\{\delta_{2^{m}}^{k_{\alpha}^{j_1}},\ldots,\delta_{2^{m}}^{k_{\alpha}^{j_{l_j}}}\}$, guaranteeing that one column of $\hat{K}\delta_{2^{n}}^{\alpha}$ equals to $\sum\limits_{i=1}^{l_j}\delta_{2^{m}}^{k_{\alpha}^{i}}$, 
are needed to determine a part of the state feedback matrix $K$ (\ie, $K\delta_{2^{n}}^{\alpha}$).
Based on this, we have the following conclusions:
\begin{enumerate}
\item If $m<p$, then for each $\alpha\in\{1,\ldots,2^{n}\}$, the Boolean matrix $\hat{K}\delta_{2^{n}}^{\alpha}$ has at least one zero column, which implies that the state feedback matrix $K$ does not exist.
\item If $m=p$, then the state feedback matrix $K$ exists if and only if $\hat{K}\in\mathcal{L}_{2^{m}\times 2^{n+p}}$.
\end{enumerate}

To sum up, the size relation of $m$ and $p$ affects the existence of the state feedback.
A procedure (Algorithm \ref{alg1}) for the design of the state feedback matrix is given as follows:
\begin{algorithm}[!htbp]
\renewcommand{\algorithmicrequire}{\textbf{Input:}}
\renewcommand{\algorithmicensure}{\textbf{Output:}}
\caption{The design of the state feedback matrix}
\label{alg1}
\begin{algorithmic}[1]
\Require $L$, $H_{i}$, $\Gamma_{ij},i\in\{1,\ldots,p\},j=1,2$.
\Ensure $K$.
\If{$m\geq p$,}
\For {$i=1$ to $p$,}
\State{let $\hat{K}_{i}=\mathbf{0}_{2\times 2^{n+m}}$;}
\If {$\mathrm{sgn}(\sum\limits_{\delta_{2^{n}}^{\alpha}\in\Gamma_{il}}H_{i}L\delta_{2^{m}}^{k_{\alpha}}\delta_{2^{n}}^{\alpha})=\delta_{2}^{j}$,}
\State{$\mathrm{Col}_{k_{\alpha}}(\hat{K}_{i}\delta_{2^{n}}^{\alpha})=\delta_{2}^{j}$;}
\EndIf
\For{$\alpha=1$ to $2^{n}$,}
\If{$\mathrm{sgn}(\hat{K}_{i}\delta_{2^{n}}^{\alpha}\mathbf{1}_{2^{m}})\neq\mathbf{1}_{2}$,}
\State{\textbf{stop};}
\EndIf
\EndFor
\EndFor
\State{compute $\hat{K}$ according to (\ref{e3.3.2});}
\If{$\mathrm{Col}_{i}(\hat{K})\neq\mathbf{0}_{2^{m}},i\in\{1,\ldots,2^{n+p}\}$,}
\If {$m=p$,}
\State{$K=\hat{K}$;}
\Else
\For{$i=1$ to $2^{n+p}$,}
\State{find a $k$ such that $[\hat{K}]_{ki}=1$;}
\State{$\mathrm{Col}_{i}(K)=\delta_{2^{m}}^k$.}
\EndFor
\EndIf
\EndIf
\EndIf
\end{algorithmic}
\end{algorithm}

\begin{remark}\rm
    The computational complexity of Algorithm \ref{alg1} is $O(4^{n+p})$.
\end{remark}

\section{Comparative analysis}\label{sec4}
In this section, we provide some comparisons with the existing results in the literature.

\subsection{Comparison with the one-step transition IO-decoupling  discussed in \cite{ValcherME2017}}

The definition of the one-step transition IO-decoupling in \cite{ValcherME2017} is given below.

\begin{definition}\label{def1}\rm\cite{ValcherME2017}
BCN (\ref{e2}) with inputs and outputs having the same cardinality (\ie, $m=p$) is said to be one-step  transition IO-decoupled if for every index $i\in\{1,\ldots,m\}$, every pair of states $x,\hat{x}\in\Delta_{2^{n}}$ and every pair of inputs $u,\hat{u}\in\Delta_{2^{m}}$ satisfying
\[
\left\{
\begin{array}{l}
H_{i}x=H_{i}\hat{x},\\
u_{i}=\hat{u}_{i},\\
\end{array}
\right.
\]
ensure that
\[H_{i}Lux=H_{i}L\hat{u}\hat{x}.\]
\end{definition}

According to Definition \ref{def1}, if BCN (\ref{e2}) is one-step transition IO-decoupled, then for each index $i\in\{1,\ldots,m\}$ and each state $x\in\Delta_{2^{n}}$, each pair of inputs $u,\hat{u}\in\Omega_{ik},k=1,2$ ensures
$H_{i}Lux=H_{i}L\hat{u}x$,
\ie,
$\mathrm{sgn}(H_{i}LV(\Omega_{ik})x)\in\Delta$.
Since any two states $x,\hat{x}\in\Gamma_{il},l=1,2$ satisfy
$\mathrm{sgn}(H_{i}LV(\Omega_{ik})x)=\mathrm{sgn}(H_{i}LV(\Omega_{ik})\hat{x})$, it follows that
$\mathrm{sgn}(\sum\limits_{x\in\Gamma_{il}}H_{i}LV(\Omega_{ik})x)$
$\in\Delta, k,l=1,2$.
Therefore, the constructed Boolean matrix $\Xi^{i}$ can be used to verify whether a given BCN is one-step transition IO-decoupled as per Definition \ref{def1}.

\begin{proposition}\label{pro4.1}\rm
BCN (\ref{e2}), with inputs and outputs having the same dimension (\ie, $m=p$), is one-step transition IO-decoupled, as per Definition \ref{def1}, if and only if for each index $i\in\{1,\ldots,m\}$, the constructed Boolean matrix $\Xi^{i}$ satisfies
\begin{equation}\label{eq_def_2}
(\delta_{2}^{l})^{\top}\Xi^{i}\in\mathcal{L}_{2\times 2},
\end{equation}
where $l=1,2$.
\end{proposition}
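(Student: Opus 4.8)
The plan is to first recognize that condition (\ref{eq_def_2}) is merely a compact restatement of the entrywise requirement $\Xi^i_{lk}\in\Delta$, and then to match this against Definition \ref{def1}. Unpacking the semi-tensor product, $(\delta_2^l)^\top\Xi^i=[\Xi^i_{l1}\ \Xi^i_{l2}]$ selects the $l$-th row-block of the $4\times2$ matrix $\Xi^i$, so $(\delta_2^l)^\top\Xi^i\in\mathcal{L}_{2\times2}$ holds exactly when both of its columns $\Xi^i_{l1},\Xi^i_{l2}$ lie in $\Delta$, \ie, when $(\delta_2^l)^\top\Xi^i\delta_2^k\in\Delta$ for every $k,l\in\{1,2\}$. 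Thus the proposition reduces to showing that BCN (\ref{e2}) is IO-decoupled as per Definition \ref{def1} if and only if $\Xi^i_{lk}\in\Delta$ for all $i\in\{1,\ldots,m\}$ and all $k,l\in\{1,2\}$. It is worth stressing that, unlike Theorem \ref{th1}, no inequality $\Xi^i_{l1}\neq\Xi^i_{l2}$ is required, which mirrors the fact that Definition \ref{def1} does not demand that each output genuinely depend on its paired input.

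For necessity I would invoke the computation already carried out immediately before the statement. Assuming Definition \ref{def1}, fix $i,l,k$ and an arbitrary $x\in\Gamma_{il}$: applying the definition with $\hat{x}=x$ and $u,\hat{u}\in\Omega_{ik}$ forces $H_iLux$ to take a single value in $\Delta$ across all $u\in\Omega_{ik}$; applying it again with $u=\hat{u}$ and $x,\hat{x}\in\Gamma_{il}$ shows this common value is constant over $\Gamma_{il}$. Consequently $\sum_{x\in\Gamma_{il}}H_iLV(\Omega_{ik})x$ is a positive multiple of a single $\delta_2^{c}$, so $\Xi^i_{lk}=\mathrm{sgn}(\cdot)=\delta_2^{c}\in\Delta$.

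The sufficiency step carries the genuinely new argument. Assume $\Xi^i_{lk}=\delta_2^{c_{lk}}\in\Delta$. The key observation is that every term $H_iLux$ already lies in $\Delta$, hence is a nonnegative unit basis vector; therefore the aggregate $\sum_{x\in\Gamma_{il}}\sum_{u\in\Omega_{ik}}H_iLux$ is a nonnegative integer vector whose $\mathrm{sgn}$ equals $\delta_2^{c_{lk}}$, so its $(3-c_{lk})$-th entry vanishes. Since a summand supported on the $(3-c_{lk})$-th coordinate would make that entry strictly positive, I conclude that every term satisfies $H_iLux=\delta_2^{c_{lk}}$, simultaneously for all $x\in\Gamma_{il}$ and all $u\in\Omega_{ik}$. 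Hence for any $x,\hat{x}$ with $H_ix=H_i\hat{x}=\delta_2^l$ and any $u,\hat{u}$ with $u_i=\hat{u}_i=\delta_2^k$ one gets $H_iLux=\delta_2^{c_{lk}}=H_iL\hat{u}\hat{x}$, which is exactly Definition \ref{def1}.

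I expect the main obstacle to be this sufficiency inference: passing from ``$\mathrm{sgn}$ of the double sum lies in $\Delta$'' back to ``each individual transition $H_iLux$ equals the same $\delta_2^{c_{lk}}$.'' This monotonicity/support argument is valid precisely because every summand is a nonnegative element of $\Delta$, so a zero coordinate in the total rules out any summand having support there; I would make this explicit to avoid silently assuming that an average fixes each term. A secondary delicate point is that the assumption $\Xi^i_{lk}\in\Delta$ implicitly forces $\Gamma_{il}\neq\emptyset$ (since $\Omega_{ik}$ is nonempty, an empty $\Gamma_{il}$ would give $\mathrm{sgn}(\mathbf{0}_2)=\mathbf{0}_2\notin\Delta$); the necessity direction therefore tacitly relies on both output values being attained, an assumption I would state explicitly where needed.
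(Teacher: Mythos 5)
Your proof is correct and follows essentially the same route as the paper: both reduce \eqref{eq_def_2} to the entrywise condition $(\delta_{2}^{l})^{\top}\Xi^{i}\delta_{2}^{k}\in\Delta$ and identify this with Definition \ref{def1} via the observation that a sum of vectors from $\Delta$ has sign pattern in $\Delta$ exactly when all summands coincide. Your explicit support argument for the sufficiency direction and your remark that $\Xi^{i}_{lk}\in\Delta$ presupposes $\Gamma_{il}\neq\emptyset$ merely make precise two steps the paper leaves implicit.
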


\begin{proof}
BCN (\ref{e2}) is one-step transition IO-decoupled as per Definition \ref{def1} if and only if for each index $i\in\{1,\ldots,m\}$, one has
\[
\mathrm{sgn}(\sum\limits_{\delta_{2^{n}}^{\alpha}\in\Gamma_{ij}}H_{i}LV(\Omega_{ik})\delta_{2^{n}}^{\alpha})\in\Delta,k,l=1,2,
\]
which is equivalent to
\[
(\delta_{2}^{l})^{\top}\Xi^{i}\delta_{2}^{k}\in\Delta,k,l=1,2,
\]
from which \eqref{eq_def_2} follows.
\end{proof}

\begin{remark}\label{re4.1}\rm
In view of Theorem \ref{th1} and Proposition \ref{pro4.1}, it is clear that the verification condition of the one-step transition IO-decoupling, as per Definition \ref{def2}, ensures that $(\delta_{2}^{l})^{\top}\Xi^{i}\delta_{2}^{1}\neq(\delta_{2}^{l})^{\top}\Xi^{i}\delta_{2}^{2},l=1,2$, which is not required for the one-step transition IO-decoupling as per Definition \ref{def1}. Note that $(\delta_{2}^{l})^{\top}\Xi^{i}\delta_{2}^{1}\neq(\delta_{2}^{l})^{\top}\Xi^{i}\delta_{2}^{2}$, ensures that every output $y_i$ is controlled by a unique input $u_i$.
\end{remark}
If BCN (\ref{e2}) is one-step transition IO-decoupled as per Definition \ref{def1}, then an explicit IO-mapping can be expressed as follows:
\[
y_{i}(t+1)=[(\delta_{2}^{1})^{\top}\Xi^{i}\ (\delta_{2}^{2})^{\top}\Xi^{i}]y_{i}(t)u_{i}(t),
\]
where $i\in\{1,\ldots,m\}$ and $(\delta_2^l)^{\top}\Xi^i\in\mathcal{L}_{2\times 2}$ holds for $l=1,2$.\\
Letting $z_{i}(t)=y_{i}(t),i\in\{1,\ldots,m\}$, one has
\[
\left\{
\begin{array}{l}
z_{i}(t+1)=[(\delta_{2}^{1})^{\top}\Xi^{i}\ (\delta_{2}^{2})^{\top}\Xi^{i}]W_{[2,2]}u_{i}(t)z_{i}(t),\\
y_{i}(t)=z_{i}(t),\\
\end{array}
\right.
\]
which is referred to as the one-step IO-decoupled canonical form in the case where the conditions in Proposition \ref{pro4.1} are satisfied.
Furthermore, the design of state feedback for one-step transition IO-decoupling was not addressed in \cite{ValcherME2017}. In what follows, we will address the one-step transition IO-decoupling as per Definition \ref{def1}.
According to Proposition \ref{pro4.1}, BCN (\ref{e3.2.2}) is one-step transition IO-decoupled as per Definition \ref{def1} if and only if for each index $i\in\{1,\ldots,m\}$, the Boolean matrix $\widehat{\Xi}^{i}$ given in Subsection \ref{subsec3.2} satisfies
\[
(\delta_{2}^{l})^{\top}\widehat{\Xi}^{i}\in\mathcal{L}_{2\times2},l=1,2.
\]
This means that the designed state feedback matrix $K$ ensures that
$(\delta_{2}^{l})^{\top}\widehat{\Xi}^{i}\delta_{2}^{k}\in\Delta,k,l=1,2$.
Since $\mathrm{sgn}(\hat{K}_{i}\delta_{2^{n}}^{\alpha}\mathbf{1}_{2^{m}})=\mathbf{1}_{2},\alpha\in\{1,\ldots,2^{n}\}$ ensures $(\delta_{2}^{l})^{\top}\widehat{\Xi}^{i}\delta_{2}^{1}\neq(\delta_{2}^{l})^{\top}\widehat{\Xi}^{i}\delta_{2}^{2},l=1,2$,
the existence condition of one-step transition IO-decoupling state feedback (as per Definition \ref{def1}) does not require  $\mathrm{sgn}(\hat{K}_{i}\delta_{2^{n}}^{\alpha}\mathbf{1}_{2^{m}})=\mathbf{1}_{2},\alpha\in\{1,\ldots,2^{n}\}$.
However, if there exist $i\in\{1,\ldots,p\}$  and $\alpha\in\{1,\ldots,2^{n}\}$ such that $\mathrm{sgn}(\hat{K}_{i}\delta_{2^{n}}^{\alpha}\mathbf{1}_{2^{m}})\neq\mathbf{1}_{2}$, then, without loss of generality assuming that $\mathrm{sgn}(\hat{K}_{i}\delta_{2^{n}}^{\alpha}\mathbf{1}_{2^{m}})=\delta_{2}^{1}$, one has
\[
\begin{array}{l}
(\mathop{\prod\nolimits_{H}}\limits_{j=1\ \ }^{i-1\ \ }((\hat{K}_{j}\delta_{2^{n}}^{\alpha})^{\top}\delta_{2}^{\eta_{j}}))\circ((\hat{K}_{i}\delta_{2^{n}}^{\alpha})^{\top}\delta_{2}^{2})\\
\ \ \ \ \ \ \ \ \ \ \ \ \ \ \ \ \ \ \ \circ(\mathop{\prod\nolimits_{H}}\limits_{j=i+1\ \ }^{p\ \ }((\hat{K}_{j}\delta_{2^{n}}^{\alpha})^{\top}\delta_{2}^{\eta_{j}}))=\mathbf{0}_{2^{m}},
\end{array}
\]
showing that the determination of the state feedback matrix $K$ is not possible.
In this case, a modified auxiliary matrix $\bar{K}_{i},i\in\{1,\ldots,p\}$ is proposed as follows:
\begin{equation}\label{e4.1}
\bar{K}_{i}\delta_{2^{n}}^{\alpha}=\left\{
\begin{array}{ll}
\hat{K}_{i}\delta_{2^{n}}^{\alpha},&\mathrm{if}\ \mathrm{sgn}(\hat{K}_{i}\delta_{2^{n}}^{\alpha}\mathbf{1}_{2^{m}})=\mathbf{1}_{2},\\
\mathbf{1}_{2^{m}}^{\top}\otimes\mathbf{1}_{2},&\mathrm{if}\ \mathrm{sgn}(\hat{K}_{i}\delta_{2^{n}}^{\alpha}\mathbf{1}_{2^{m}})\neq\mathbf{1}_{2},\\
\end{array}
\right.
\end{equation}
where $\alpha\in\{1,\ldots,2^{n}\}$ and  $\hat{K}_{i}$ is computed according to (\ref{e3.2.4}).
For each index $i\in\{1,\ldots,p\}$ and each state $x(t)\in\Delta_{2^{n}}$, we design the state feedback matrix $K$ such that $y_{i}(t+1)=H_{i}L(Kx(t)v(t))x(t)=v_{i}(t)$ if $\mathrm{sgn}(\hat{K}x(t)\mathbf{1}_{2^{m}})=\mathbf{1}_{2}$, otherwise, we force $K$ such that  $y_{i}(t+1)=\mathrm{sgn}(\hat{K}x(t)\mathbf{1}_{2^{m}})$ regardless of what value $v_{i}(t)$ takes.
Based on this, a new Boolean matrix $\bar{K}\in\mathcal{B}_{2^{m}\times 2^{n+p}}$ is constructed as follows:
\begin{equation}\label{eq4.2}
\mathrm{Col}_{\eta}(\bar{K}\delta_{2^{n}}^{\alpha})=\mathop{\prod\nolimits_{H}}\limits_{i=1\ \ }^{p\ \ }(\bar{K}_{i}\delta_{2^{n}}^{\alpha})^{\top}\delta_{2}^{\eta_{i}},
\end{equation}
which is used to ensure the existence of the state feedback matrix $K$.

\begin{proposition}\label{pro4.2}\rm
BCN (\ref{e2}) is one-step transition IO-decoupled (as per Definition \ref{def1}) under the state feedback control (\ref{e3.2.1}) if
\[
\mathrm{Col}_{i}(\bar{K})\neq\mathbf{0}_{2^{m}}
\]
holds for $i\in\{1,\ldots,2^{n+p}\}$.
\end{proposition}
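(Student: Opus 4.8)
The plan is to reduce the claim to the characterization of decoupling in the sense of Definition~\ref{def1} provided by Proposition~\ref{pro4.1}, applied to the closed-loop system \eqref{e3.2.2}. Concretely, it suffices to exhibit, from the hypothesis $\mathrm{Col}_i(\bar K)\neq\mathbf 0_{2^m}$ for all $i\in\{1,\ldots,2^{n+p}\}$, a logical feedback matrix $K$ for which every constructed matrix $\widehat\Xi^i$ satisfies $(\delta_2^l)^\top\widehat\Xi^i\in\mathcal L_{2\times 2}$, $l=1,2$; the conclusion then follows at once from Proposition~\ref{pro4.1}. This mirrors the sufficiency direction of Theorem~\ref{th3}, with $\bar K$ replacing $\hat K$, the essential difference being that for Definition~\ref{def1} we only need each row of $\widehat\Xi^i$ to be logical, not to have distinct columns.

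First I would build $K$ column-by-column. For every state $\delta_{2^n}^\alpha$ and every new input $v=\delta_{2^p}^\eta$, the nonzero-column hypothesis supplies an index $k$ with $[\bar K\delta_{2^n}^\alpha]_{k\eta}=1$, so I set $\mathrm{Col}_\eta(K\delta_{2^n}^\alpha)=\delta_{2^m}^k$; this is well defined precisely because no column of $\bar K$ vanishes. Unwinding \eqref{eq4.2}, the equality $[\bar K\delta_{2^n}^\alpha]_{k\eta}=1$ is equivalent to $[\bar K_i\delta_{2^n}^\alpha]_{\eta_i k}=1$ for every $i\in\{1,\ldots,p\}$, which via \eqref{e4.1} separates into two cases for each output node.

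The core is then a case analysis on each $i$ and each $x=\delta_{2^n}^\alpha\in\Gamma_{il}$. If $\mathrm{sgn}(\hat K_i x\mathbf 1_{2^m})=\mathbf 1_2$, then $\bar K_i x=\hat K_i x$, and by \eqref{e3.2.4} the constraint carried by the $i$-th factor forces the applied input to realize $y_i(t+1)=\delta_2^{\eta_i}=v_i(t)$, exactly as in Theorem~\ref{th3}. If instead $\mathrm{sgn}(\hat K_i x\mathbf 1_{2^m})\neq\mathbf 1_2$, then $\bar K_i x=\mathbf 1_{2^m}^\top\otimes\mathbf 1_2$ imposes no constraint through node $i$, and since in this case $H_iL\delta_{2^m}^{k_\alpha}x$ is the same for all $k_\alpha$, every admissible input yields the single value $y_i(t+1)=\mathrm{sgn}(\hat K_i x\mathbf 1_{2^m})$, independently of $v_i(t)$. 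Thus, for each fixed $x\in\Gamma_{il}$, the value $y_i(t+1)$ produced under any $v$ with $v_i(t)=\delta_2^k$ is a single, well-defined element of $\Delta$.

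I expect the main obstacle to be the final aggregation step: $\widehat\Xi^i_{lk}=\mathrm{sgn}(\sum_{x\in\Gamma_{il}}H_iL(KxV(\hat\Omega_{ik}))x)$ collects these per-state values over all $x\in\Gamma_{il}$, and one must verify that the $\mathrm{sgn}$ of the sum lands in $\Delta$ rather than in $\{\mathbf 0_2,\mathbf 1_2\}$, i.e.\ that the controllable contributions $\delta_2^k$ and the forced contributions $\mathrm{sgn}(\hat K_i x\mathbf 1_{2^m})$ are mutually consistent. This is exactly the point where Proposition~\ref{pro4.2} is weaker than Theorem~\ref{th3}: because Definition~\ref{def1} does not require $u_i$ to control $y_i$, the columns of $\widehat\Xi^i$ need not be distinct, so the ``forced'' nodes absorbed by the don't-care block $\mathbf 1_{2^m}^\top\otimes\mathbf 1_2$ of \eqref{e4.1} are admissible here even though they would violate the distinctness $(\delta_2^l)^\top\widehat\Xi^i\delta_2^1\neq(\delta_2^l)^\top\widehat\Xi^i\delta_2^2$ demanded by Definition~\ref{def2}. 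Once each $(\delta_2^l)^\top\widehat\Xi^i$ is shown to be a logical matrix, Proposition~\ref{pro4.1} closes the argument.
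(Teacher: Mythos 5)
Your reduction -- build $K$ from the nonzero entries of $\bar K$, show every $(\delta_2^l)^{\top}\widehat{\Xi}^{i}$ is a logical matrix, and invoke Proposition~\ref{pro4.1} -- is the reduction the paper intends (its own proof is omitted as ``similar to Theorem~\ref{th3}''), but the step you yourself flag as the main obstacle is precisely where the argument breaks, and the assertion you use to get past it is false. In the forced case $\mathrm{sgn}(\hat K_i x\mathbf 1_{2^m})\neq\mathbf 1_2$ you claim that ``$H_iL\delta_{2^m}^{k_\alpha}x$ is the same for all $k_\alpha$,'' so that every admissible input yields the single value $\mathrm{sgn}(\hat K_i x\mathbf 1_{2^m})$. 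That is not what the condition says. By \eqref{e3.2.4}, a column of $\hat K_i x$ equals $\delta_2^j$ only when the corresponding input sends the \emph{entire} class $\Gamma_{il}$ to $y_i=\delta_2^j$; the column is $\mathbf 0_2$ when that input splits the class. So $\mathrm{sgn}(\hat K_i x\mathbf 1_{2^m})=\delta_2^1$ only means no input sends the whole class uniformly to $\delta_2^2$; it does not exclude inputs under which part of $\Gamma_{il}$ goes to $\delta_2^1$ and part to $\delta_2^2$. Since $\bar K_i x=\mathbf 1_{2^m}^{\top}\otimes\mathbf 1_2$ removes node $i$ from the Hadamard product in \eqref{eq4.2} entirely, nothing prevents the selected column of $K$ from being such a splitting input, in which case $\widehat{\Xi}^{i}_{lk}=\mathbf 1_2\notin\Delta$ and Definition~\ref{def1} fails. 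The subcase $\mathrm{sgn}(\hat K_i x\mathbf 1_{2^m})=\mathbf 0_2$ (every input splits the class) is not addressed at all, and there no uniform choice can work.

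To see that the needed consistency genuinely fails, take $n=2$, $m=p=1$, $H_1=\delta_2[1\ 1\ 2\ 2]$ and $L=\delta_4[1\ 3\ 1\ 1\ 1\ 3\ 3\ 3]$. Both inputs split $\Gamma_{11}=\{\delta_4^1,\delta_4^2\}$ (indeed $H_1Lu\delta_4^1=\delta_2^1$ and $H_1Lu\delta_4^2=\delta_2^2$ for every $u$), so $\hat K_1\delta_4^1=\hat K_1\delta_4^2=\mathbf 0_{2\times 2}$ and, by \eqref{e4.1}, $\bar K_1\delta_4^1=\bar K_1\delta_4^2=\mathbf 1_{2}^{\top}\otimes\mathbf 1_2$; on $\Gamma_{12}$ one gets $\bar K_1\delta_4^3=\bar K_1\delta_4^4=\delta_2[1\ 2]$. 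Every column of $\bar K$ is then nonzero, yet the two states $\delta_4^1,\delta_4^2$ share the same output value and produce different next outputs under \emph{any} pair of inputs, so no feedback of the form \eqref{e3.2.1} -- in particular none selected from $\bar K$ -- makes the closed loop satisfy Definition~\ref{def1}. Your proof therefore cannot be completed as written: the aggregation step requires, for each forced class, that the selected input be one whose column of $\hat K_i x$ is nonzero (and that such a column exist at all), which is exactly the information the all-ones block of \eqref{e4.1} throws away. Any correct argument must either restrict the ``don't care'' block to those nonzero columns and exclude the $\mathbf 0_2$ case, or add a hypothesis doing so.
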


\begin{proof}
The proof is omitted as it is similar to the proof of Theorem \ref{th3}.
\end{proof}

For any $\bar{K}_i\delta_{2^{n}}^{\alpha}$ satisfying $\bar{K}_i\delta_{2^{n}}^{\alpha}\neq\mathbf{1}^{\top}_{2^{m}}\otimes\mathbf{1}_2$, assume the $k$-th column of those $\bar{K}_i\delta_{2^{n}}^{\alpha}$ are equivalent and denoted by $\delta_{2}^{j}$.
Then one has $\mathrm{sgn}(\sum\limits_{\delta_{2^{n}}^{\alpha}\in\Gamma_{ij}}H_iL\delta_{2^{m}}^{k}\delta_{2^{n}}^{\alpha})=\delta_2^j$.
For any $\bar{K}_i\delta_{2^{n}}^{\alpha}$ satisfying $\bar{K}_i\delta_{2^{n}}^{k}=\mathbf{1}^{\top}_{2^{m}}\otimes\mathbf{1}_2$,
one has $\mathrm{sgn}(\hat{K}_i\delta_{2^{n}}^{\alpha}\mathbf{1}_{2^{m}})\neq\mathbf{1}_2$, without loss of generality, assume $\mathrm{sgn}(\hat{K}_i\delta_{2^{n}}^{\alpha}\mathbf{1}_{2^{m}})=\delta_2^1$.
It follows that $\mathrm{sgn}(\sum\limits_{\delta_{2^{n}}^{\alpha}\in\Gamma_{ij}}H_iL\delta_{2^{m}}^{k}\delta_{2^{n}}^{\alpha})=\delta_2^1$.
As a result, if $K=\delta_{2^{m}}^{k}\otimes\mathbf{1}_{2^{n+p}}^{\top}$, then
the given BCN is one-step transition IO-decoupled (as per Definition \ref{def1}) under the state feedback $u(t)=(\delta_{2^{m}}^{k}\otimes\mathbf{1}_{2^{n+p}}^{\top})x(t)v(t)$.
Unfortunately, such a state feedback matrix $K$ cannot be obtained from $\bar{K}$ constructed via (\ref{eq4.2}) due to the existence of
$\bar{K}_{i}\delta_{2^{n}}^{\alpha}\neq\mathbf{1}^{\top}_{2^{m}}\otimes\mathbf{1}_{2}$ leading to $\mathrm{Col}_{\eta}(\bar{K})\neq\mathrm{Col}_{\bar{\eta}}(\bar{K})$, where $\delta_{2^{p}}^{\eta}=\ltimes_{j=1}^{i-1}\delta_{2}^{\eta_j}\ltimes\delta_{2}^{1}\ltimes(\ltimes_{j=i+1}^{p}\delta_{2}^{\eta_j})$ and $\delta_{2^{p}}^{\bar{\eta}}=\ltimes_{j=1}^{i-1}\delta_{2}^{\eta_j}\ltimes\delta_{2}^{2}\ltimes(\ltimes_{j=i+1}^{p}\delta_{2}^{\eta_j})$.
Consequently, $\mathrm{Col}_i(\bar{K})\neq\mathbf{0}_{2^{m}}$ is just a sufficient condition.
For instance, given a BCN with $m=2$ and $p=2$, if for each $\alpha\in\{1,\ldots,2^{n}\}$, the obtained $\bar{K}_1\delta_{2^{n}}^{\alpha},\bar{K}_2\delta_{2^{n}}^{\alpha}$ are $\bar{K}_1\delta_{2^{n}}^{\alpha}=\delta_{2}[1\ 2\ 2\ 2]$ and $\bar{K}_2\delta_{2^{n}}^{\alpha}=\delta_{2}[1\ 1\ 1\ 2]$, then one has
\[
\bar{K}\delta_{2^{n}}^{\alpha}=\left[
\begin{array}{cccc}
1&0&0&0\\
0&0&1&0\\
0&0&1&0\\
0&0&0&1\\
\end{array}
\right]
\]
where $\alpha\in\{1,\ldots,2^{n}\}$.
Due to the fact that $\mathrm{Col}_2(\bar{K})=\mathbf{0}_{4}$, we cannot determine whether a state feedback exists such that the given BCN is one-step transition IO-decoupled as per Definition \ref{def1}.
However, if one takes $K=\delta_{4}^{1}\otimes\mathbf{1}^{\top}_{2^{n+2}}$, then for $i\in\{1,2\}$, one has $\mathrm{sgn}(\sum\limits_{\delta_{2^{n}}^{\alpha}\in\Gamma_{ij}}H_iL\delta_{4}^{1}\delta_{2^{n}}^{\alpha})=\delta_2^1$, which means that $(\delta_2^j)^{\top}\widehat{\Xi}^{i}\in\mathcal{L}_{2\times 2},i\in\{1,2\}$.
Hence, the given BCN is one-step transition IO-decoupled (as per Definition \ref{def1}) under the state feedback $u(t)=(\delta_{4}^{1}\otimes\mathbf{1}^{\top}_{2^{n+2}})x(t)v(t)$.

The procedure for the design of a state feedback, leading to a one-step transition IO-decoupled BCN (as per Definition \ref{def1}), is given in
Algorithm \ref{alg1} after altering Steps 8-10 as follows:
\begin{algorithm}[!htbp]
\begin{algorithmic}
\If{$\mathrm{sgn}(\hat{K}_{i}\delta_{2^{n}}^{\alpha}\mathbf{1}_{2^{m}})\neq\mathbf{1}_{2}$,}
\State{$\hat{K}_{i}\delta_{2^{n}}^{\alpha}=\mathbf{1}_{2^{m}}^{\top}\otimes\mathbf{1}_{2}$;}
\EndIf
\end{algorithmic}
\end{algorithm}

\subsection{Comparison with the general IO-decoupling of \cite{ValcherME2017}}
Reference \cite{ValcherME2017} proposed a more general definition of BCN IO-decoupling, which is given below.

\begin{definition}\label{def3}\rm\cite{ValcherME2017}
BCN (\ref{e2}) with inputs and outputs having the same cardinality (\ie, $m=p$) is said to be IO-decoupled if for every index $i\in\{1,\ldots,m\}$ and every initial state $x(0)\in\Delta_{2^{n}}$, if $\{u(t)\}_{t=0}^{+\infty}$ and $\{\hat{u}(t)\}_{t=0}^{+\infty}$ are two input sequences characterized by the fact that their $i$-th entries coincide at every time instant, \ie,
\[
u_{i}(t)=\hat{u}_{i}(t),\forall t\in\mathbb{N},
\]
then the output sequences $\{y(t)\}_{t=0}^{+\infty}$ and $\{\hat{y}(t)\}_{t=0}^{+\infty}$, generated by BCN (\ref{e2}) corresponding to $x(0),\{u(t)\}_{t=0}^{+\infty}$ and $\{\hat{u}(t)\}_{t=0}^{+\infty}$, respectively, satisfy
\[
y_{i}(t)=\hat{y}_{i}(t),\forall t\in\mathbb{N}.
\]
\end{definition}

In light of the results obtained in \cite{ValcherME2017}, the authors in \cite{Liyifeng2022a} discussed the IO-decoupling of BCNs, and presented a necessary and sufficient vertex partitioning condition. However, there is no result on the state feedback control design for the IO-decoupling (as per Definition \ref{def3}). Using the proposed approach, a sufficient condition can be provided for the IO-decoupling control design (as per Definition \ref{def3}) in the following proposition.

\begin{proposition}\label{pro4.3}\rm
If $\mathrm{Col}_{i}(\bar{K})\neq\mathbf{0}_{2^{m}}$
holds for $i\in\{1,\ldots,2^{n+p}\}$, then
BCN (\ref{e2}) is IO-decoupled (as per Definition \ref{def3}) under the state feedback control (\ref{e3.2.1}).
Moreover, each column of $K$ can be designed as follows:
\[\mathrm{Col}_{i}(K)\in\{\delta_{2^{m}}^{k}|[\bar{K}]_{ki}=1,k\in\{1,\ldots,2^{m}\}\},\]
where $i\in\{1,\ldots,2^{n+p}\}$.
\end{proposition}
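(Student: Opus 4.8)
The plan is to reduce Proposition \ref{pro4.3} to the one-step transition result established in Proposition \ref{pro4.2}. The key observation is that Definition \ref{def3} is the infinite-horizon analogue of the one-step transition decoupling in Definition \ref{def1}: if every output $y_i$ at time $t+1$ depends only on $y_i(t)$ and $u_i(t)$, then by a simple induction this propagates forward so that the entire output sequence $\{y_i(t)\}$ depends only on $x(0)$ (through $y_i(0)=H_ix(0)$) and the input subsequence $\{u_i(t)\}$. Thus the one-step transition property (as per Definition \ref{def1}) implies the general IO-decoupling property (as per Definition \ref{def3}). First I would make this implication precise at the level of the closed-loop dynamics.

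The main steps are as follows. First, invoke Proposition \ref{pro4.2}: under the hypothesis $\mathrm{Col}_i(\bar{K})\neq\mathbf{0}_{2^{m}}$ for all $i\in\{1,\ldots,2^{n+p}\}$, the closed-loop BCN \eqref{e3.2.2} resulting from the state feedback \eqref{e3.2.1} (with $K$ chosen columnwise from the nonzero entries of $\bar K$) is one-step transition IO-decoupled as per Definition \ref{def1}. By Proposition \ref{pro4.1}, this is equivalent to $(\delta_2^l)^{\top}\widehat{\Xi}^{i}\in\mathcal{L}_{2\times 2}$ for each $i$ and $l=1,2$, which in turn means $y_i(t+1)$ is a deterministic logical function of the pair $(y_i(t),v_i(t))$ alone. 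Second, fix an index $i$, an initial state $x(0)$, and two new-input sequences $\{v(t)\}$ and $\{\hat v(t)\}$ whose $i$-th entries agree for all $t$, i.e. $v_i(t)=\hat v_i(t)$. Since both trajectories start from the same $x(0)$, they share $y_i(0)=H_ix(0)=\hat y_i(0)$. Third, argue by induction on $t$: assuming $y_i(t)=\hat y_i(t)$, the one-step relation $y_i(t+1)=[(\delta_2^1)^{\top}\widehat{\Xi}^i\ (\delta_2^2)^{\top}\widehat{\Xi}^i]\,y_i(t)v_i(t)$ together with $v_i(t)=\hat v_i(t)$ yields $y_i(t+1)=\hat y_i(t+1)$, closing the induction and giving $y_i(t)=\hat y_i(t)$ for all $t$, which is exactly Definition \ref{def3}.

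The only subtlety is a bookkeeping point about the input variable: Proposition \ref{pro4.2} and Definition \ref{def1} concern the closed-loop system driven by the new input $v$, whereas Definition \ref{def3} is phrased for BCN \eqref{e2} driven by $u$. I would resolve this by noting that the state feedback \eqref{e3.2.1} sets $u(t)=Kx(t)v(t)$, so the statement "BCN \eqref{e2} is IO-decoupled under the state feedback" should be read as the closed-loop system in the new input $v$ being IO-decoupled, with $v_i(t)$ playing the role of the decoupled control for output $y_i$. Once this identification is made, the argument goes through verbatim. The design formula for the columns of $K$ is then precisely the columnwise selection $\mathrm{Col}_i(K)\in\{\delta_{2^{m}}^{k}\mid[\bar K]_{ki}=1\}$, which is well-defined exactly because every column of $\bar K$ is nonzero. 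I expect the induction itself to be entirely routine; the main (modest) obstacle is articulating the reduction to Definition \ref{def1} cleanly, since Definition \ref{def3} quantifies over full infinite input sequences rather than single-step transitions, and one must verify that the time-invariant one-step logical map suffices to propagate equality of outputs indefinitely.
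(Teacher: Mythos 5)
Your proposal is correct and follows essentially the same route as the paper: the paper's proof simply cites Proposition \ref{pro4.2} together with the fact that one-step transition IO-decoupling (Definition \ref{def1}) is a special case of IO-decoupling (Definition \ref{def3}), which is exactly the reduction you make, with the forward induction on $t$ spelled out explicitly rather than left implicit.
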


\begin{proof}
The proof follows directly from the fact that the one-step transition IO-decoupling in the sense of Definition \ref{def1} is a special case of the IO-decoupling in the sense of Definition \ref{def3}.
\end{proof}

By virtue of Proposition \ref{pro4.2}, $\mathrm{Col}_{\eta}(\bar{K})\neq\mathbf{0}_{2^{m}},\eta\in\{1,\ldots,2^{n+p}\}$ results in such a situation that for each index $i\in\{1,\ldots,p\}$, the output sequence $\{y_i(t)\}_{t=0}^{+\infty}$ generated by BCN (\ref{e3.2.2}) remains unchanged for any initial state $x(0)\in\Gamma_{ij_{0}},j_{0}\in\{1,2\}$ and any input sequence $\{v(t)\}_{t=0}^{+\infty}$ with $v(t)\in\hat{\Omega}_{ik_t},k_t\in\{1,2\},t\in\mathbb{Z}_{+}$, where $\Gamma_{ij}$ and $\hat{\Omega}_{ik}$ are defined in subsection 3.1 and subsection 3.3 respectively.
However, Definition \ref{def3} only requires that for each index $i\in\{1,\ldots,p\}$, the output sequence $\{y_i(t)\}_{t=0}^{+\infty}$ generated by BCN (\ref{e3.2.2}) corresponding to each initial state $x(0)\in\Delta_{2^{n}}$ remains unchanged for any input sequence $\{v(t)\}_{t=0}^{+\infty}$ with $v(t)\in\hat{\Omega}_{ik_t},k_t\in\{1,2\},t\in\mathbb{Z}_{+}$.
Therefore, the results in Proposition \ref{pro4.3} are limited.
It is worth noting that for each index $i\in\{1,\ldots,p\}$, $\widehat{\Xi}^{i}$ satisfying $[(\delta_{2}^{1})^{\top}\widehat{\Xi}^{i}\ (\delta_{2}^{2})^{\top}\widehat{\Xi}^{i}]\in\mathcal{L}_{2\times 4}$ means that the output $y_i(t+1)$ generated by BCN (\ref{e3.2.2}) remains unchanged for any state $x(t)\in\Gamma_{ij_t},j_t\in\{1,2\}$ and any input $v(t)\in\hat{\Omega}_{ik_t},k_t\in\{1,2\}$.
In our future research investigations, we will work on the construction of some matrices, similar to $\widehat{\Xi}^{1},\ldots,\widehat{\Xi}^{p}$, that have to satisfy some conditions to ascertain whether a given BCN is IO-decoupled as per Definition \ref{def3} under a state feedback.
Specifically, for each index $i\in\{1,\ldots,p\}$ and any two states $x,\bar{x}\in\Delta_{2^{n}}$, if $x$ and $\bar{x}$ have the same output sequence $\{y_i(t)\}_{t=0}^{+\infty}$ under any input sequence $\{v(t)\}_{t=0}^{+\infty}$ with $v(t)\in\hat{\Omega}_{ik_t},k_t\in\{1,2\},t\in\mathbb{Z}_{+}$, then we put $x$ and $\bar{x}$ into the same state subset.
Repeating this process, we can obtain a set of state subsets, denoted by $\Theta_{i1},\ldots,\Theta_{i\mu_i}$, where $i\in\{1,\ldots,p\}$.
Based on this, a set of Boolean matrices $\widetilde{\Xi}^{1},\ldots,\widetilde{\Xi}^{p}$ are constructed as follows:
\begin{enumerate}
\item for each $\alpha\in\{1,\ldots,2^{n}\}$, let \[L'\delta_{2^{n}}^{\alpha}=LW_{[2^{n},2^{m}]}\delta_{2^{n}}^{\alpha}[V(\hat{\Omega}_{i1})\ V(\hat{\Omega}_{i2})].\]
\item for each index $i\in\{1,\ldots,p\}$, $\widetilde{\Xi}^{i}$ is obtained according to

\[\mathrm{Row}_{k}(\widetilde{\Xi}^{i})=\mathrm{sgn}(\sum\limits_{\delta_{2^{n}}^{j}\in\Theta_{ik}}\mathrm{Row}_{j}(L')).\]
\end{enumerate}
In view of the number of ``ones'' in each column of the constructed Boolean matrix, we determine whether a given BCN is IO-decoupling as per Definition \ref{def3} under a state feedback.

\subsection{Comparison with the IO-decomposition approach}
References \cite{Fushihua2017,Panjinfeng2018} defined the IO-decoupling of BCNs using the IO-decomposition as follows:

\begin{definition}\label{def4}\rm\cite{Fushihua2017,Panjinfeng2018}
The IO-decoupling of BCN (\ref{e2}) is solvable if there exists a coordinate transformation $z=Px$ and a state feedback control (\ref{e3.2.1}) such that BCN (\ref{e2}) can be converted into
\begin{equation}\label{e4.2}
\left\{
\begin{array}{l}
z^{i}(t+1)=F_{i}v_{i}(t)z^{i}(t),\\
z^{p+1}(t+1)=F_{p+1}v(t)z(t),\\
y_{i}(t+1)=E_{i}z^{i}(t),\\
\end{array}
\right.
\end{equation}
where $z=\ltimes_{i=1}^{p+1}z^{i}\in\Delta_{2^{n}},z^{i}\in\Delta_{2^{n_{i}}}, F_{i}\in\mathcal{L}_{2^{n_{i}}\times 2^{n_{i}+1}},$
$E_{i}\in\mathcal{L}_{2\times 2^{n_{i}}},i\in\{1,\ldots,p\},F_{p+1}\in\mathcal{L}_{2^{n_{p+1}}\times2^{n+p}}$ and $n=n_{1}+\cdots+n_{p+1}$.
\end{definition}

Using the proposed approach, a sufficient condition for the IO-decoupling in the sense of Definition \ref{def4} is presented.

\begin{proposition}\label{pro4.4}\rm
If \[
\left\{
\begin{array}{l}
\mathrm{Col}_{i}(\bar{K})\neq\mathbf{0}_{2^{m}},i\in\{1,\ldots,2^{n+p}\},\\
(H_{1}\ast\cdots\ast H_{p})\mathbf{1}_{2^{n}}=2^{n-p}\mathbf{1}_{2^{p}},\\
\end{array}
\right.
\]
Then the following statements hold:
\begin{enumerate}
\item Taking the columns of the state feedback matrix $K$ as follows
\[
\mathrm{Col}_{i}(K)\in\{\delta_{2^{m}}^{k}|[\bar{K}]_{ki}=1,k\in\{1,\ldots,2^{m}\}\}
\]
leads to an IO-decoupled closed loop system as per Definition \ref{def4}.
\item There exists a $P_{p+1}\in\mathcal{L}_{2^{n-p}\times 2^{n}}$ such that $P=H_1\ast\cdots\ast H_p\ast P_{p+1}$ is nonsingular, and the IO-decomposed form is given by
\begin{equation}\label{eq4.1}
\left\{
\begin{array}{l}
z^{i}(t+1)=[(\delta_{2}^{1})^{\top}\widehat{\Xi}^{i}\ (\delta_{2}^{2})^{\top}\widehat{\Xi}^{i}]W_{[2,2]}v_{i}(t)z^{i}(t),\\
z^{p+1}(t+1)=P_{p+1}\widehat{L}(I_{2^{p}}\otimes P^{\top})v(t)z(t),\\
y_{i}(t)=z^{i}(t),\\
\end{array}
\right.
\end{equation}
where $i\in\{1,\ldots,p\}$ and $\widehat{L}=LKW_{[2^{p},2^{n}]}(I_{2^{p}}\otimes\Phi_{2^{n}}).\Phi_{2^{n}}=\mathrm{diag}\{\delta_{2^{n}}^{1},\ldots,\delta_{2^{n}}^{2^{n}}\}$.
\end{enumerate}
\end{proposition}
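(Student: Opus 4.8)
The plan is to reduce the statement to the two building blocks already at hand: Proposition~\ref{pro4.2}, which takes care of the decoupling, and the construction used in the proof of Theorem~\ref{th5}, which takes care of the coordinate change. First I would dispatch the decoupling. Choosing $\mathrm{Col}_{i}(K)$ as prescribed from the columns of $\bar K$ is precisely the feedback law whose existence is guaranteed by $\mathrm{Col}_{i}(\bar K)\neq\mathbf 0_{2^m}$, so Proposition~\ref{pro4.2} applies verbatim and the closed-loop BCN \eqref{e3.2.2} is one-step transition IO-decoupled in the sense of Definition~\ref{def1}. By Proposition~\ref{pro4.1} this is equivalent to $(\delta_2^l)^\top\widehat\Xi^i\in\mathcal L_{2\times2}$ for each $i,l$, so each closed-loop output obeys the explicit scalar recursion $y_i(t+1)=[(\delta_2^1)^\top\widehat\Xi^i\ (\delta_2^2)^\top\widehat\Xi^i]y_i(t)v_i(t)$. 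Setting $z^i(t)=y_i(t)$ and inserting $W_{[2,2]}$ to reorder the factors reproduces the first line of \eqref{eq4.1}, which is an instance of the decoupled blocks $z^i(t+1)=F_iv_i(t)z^i(t)$, $y_i(t)=E_iz^i(t)$ of Definition~\ref{def4} with $E_i=I_2$. It is worth noting that only Definition~\ref{def1} (columns in $\Delta$), and not the stronger uniqueness of Definition~\ref{def2}, is needed here, which is exactly what lets the argument survive even when some $\bar K_i$ were relaxed to $\mathbf 1_{2^m}^\top\otimes\mathbf 1_2$.

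Next I would collapse the closed-loop state equation into a single structure matrix. Substituting $u(t)=Kx(t)v(t)$ into $x(t+1)=Lu(t)x(t)$ gives $x(t+1)=LKx(t)v(t)x(t)$; reordering with $W_{[2^p,2^n]}$ and merging the repeated $x(t)$ through the power-reducing matrix $\Phi_{2^n}$ yields $x(t+1)=\widehat L\,v(t)x(t)$ with $\widehat L=LKW_{[2^p,2^n]}(I_{2^p}\otimes\Phi_{2^n})$, exactly the $\widehat L$ appearing in the statement. This is a routine semi-tensor-product computation.

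The coordinate transformation is where the second hypothesis enters, and I would import the construction from the proof of Theorem~\ref{th5} essentially unchanged, since it depends only on the output map and not on the dynamics. Writing $\Lambda_j=\{x\mid(H_1\ast\cdots\ast H_p)x=\delta_{2^p}^j\}$, the condition $(H_1\ast\cdots\ast H_p)\mathbf 1_{2^n}=2^{n-p}\mathbf 1_{2^p}$ forces $|\Lambda_j|=2^{n-p}$ for every $j$, so the $\Lambda_j$ partition $\Delta_{2^n}$ into equal blocks. Enumerating each block lets me define the bijection $\psi(\delta_{2^n}^{\lambda_{jk}})=\delta_{2^p}^j\ltimes\delta_{2^{n-p}}^k$, represented by a nonsingular $P\in\mathcal L_{2^n\times2^n}$ that factors as $P=H_1\ast\cdots\ast H_p\ast P_{p+1}$, where $P_{p+1}\in\mathcal L_{2^{n-p}\times2^n}$ is the surjection onto the second factor; this is precisely the $P_{p+1}$ and nonsingular $P$ promised in part~2.

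Finally I would substitute $z=Px$, equivalently $x=P^\top z$ since a permutation matrix satisfies $P^{-1}=P^\top$, into the closed-loop equation. The leading $p$ coordinates $z^i=y_i$ already satisfy the decoupled recursion obtained in the first step, and for the residual block I compute $z^{p+1}(t+1)=P_{p+1}x(t+1)=P_{p+1}\widehat L\,v(t)P^\top z(t)=P_{p+1}\widehat L(I_{2^p}\otimes P^\top)v(t)z(t)$, using $v\ltimes(P^\top z)=(I_{2^p}\otimes P^\top)(v\ltimes z)$; this reproduces the second line of \eqref{eq4.1}. Together the two parts place the closed loop in the IO-decomposed form \eqref{e4.2}, which proves part~1, while the explicit $P_{p+1}$, $P$ and equations constitute part~2. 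The main obstacle, and the step deserving the most care, is the coordinate change: I must verify that the block sizes match, so that the leading $p$ factors of $Px$ coincide with $\ltimes_{i=1}^p y_i$ while the trailing factor is genuinely free, which is exactly what the count $|\Lambda_j|=2^{n-p}$ secures, and that the Definition~\ref{def1} decoupling passes through the change of coordinates intact.
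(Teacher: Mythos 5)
Your proposal is correct and follows essentially the same route as the paper: item (1) via Proposition \ref{pro4.2} (Definition \ref{def1} decoupling) combined with the existence condition of Theorem \ref{th5}, and item (2) by reusing the $\Lambda_j$-partition/bijection $\psi=\psi_1\ltimes\psi_2$ construction from the proof of Theorem \ref{th5} to get the nonsingular $P=H_1\ast\cdots\ast H_p\ast P_{p+1}$ and then reading off the two blocks of \eqref{eq4.1}. The only difference is that you spell out the routine semi-tensor-product bookkeeping (the derivation of $\widehat{L}$ and the use of $P^{-1}=P^{\top}$), which the paper leaves implicit.
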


\begin{proof}
Let us prove item (1).
From Proposition \ref{pro4.2}, if $\mathrm{Col}_{i}(\bar{K})\neq\mathbf{0}_{2^{m}},i\in\{1,\ldots,2^{n+p}\}$, then BCN (\ref{e2}) is one-step transition IO-decoupled as per Definition \ref{def1} under the state feedback control (\ref{e3.2.1}).
This means that taking the columns of the state feedback matrix $K$ as follows
\[
\mathrm{Col}_{i}(K)\in\{\delta_{2^{m}}^{k}|[\bar{K}]_{ki}=1,k\in\{1,\ldots,2^{m}\}\}
\]
leads to a one-step transition IO-decoupled closed-loop system as per Definition \ref{def1}.
By virtue of Theorem \ref{th5}, the IO-decomposed form (\ref{e4.2}) of the resulting closed-loop system exists due to the fact that $(H_1\ast\cdots\ast H_p)\mathbf{1}_{2^{n}}=2^{n-p}\mathbf{1}_{2^{p}}$.
Therefore, the resulting closed-loop system is IO-decoupled as per Definition \ref{def4}.

Now, we prove item (2). Letting $\Lambda_{j}=\{x|(H_1\ast\cdots\ast  H_p)x=\delta_{2^{p}}^{j}\},j\in\{1,\ldots, 2^{p}\}$, one has $|\Lambda_{j}|=2^{n-p},j\in\{1,\ldots,2^{p}\}$ due to the fact that $(H_{1}\ast\cdots\ast H_{p})\mathbf{1}_{2^{n}}=2^{n-p}\mathbf{1}_{2^{p}}$.
For $j=1,\ldots,2^{p}$, denote $\Lambda_{j}=\{\delta_{2^{n}}^{\lambda_{j1}},\ldots,\delta_{2^{n}}^{\lambda_{j|\Lambda_{j}|}}\}$. Define a one-to-one mapping $\psi$ from $\Delta_{2^{n}}$ to $\Delta_{2^{n}}$ such that $\psi(\delta_{2^{n}}^{\lambda_{jk}})=\delta_{2^{p}}^{j}\ltimes\delta_{2^{n-p}}^{k}$,
where $\delta_{2^{n}}^{\lambda_{jk}}\in\Lambda_j,j\in\{1,\ldots,2^{p}\}$ and $k\in\{1,\ldots,2^{n-p}\}$.
Then two surjections are induced from $\psi$:
\begin{enumerate}
\item $\psi_1$ is a surjection from $\Delta_{2^{n}}$ to $\Delta_{2^{p}}$ such that $\psi_1(\delta_{2^{n}}^{\lambda_{jk}})=\delta_{2^{p}}^{j}$, where $\delta_{2^{n}}^{\lambda_{jk}}\in\Lambda_{j},j\in\{1,\ldots,2^{p}\}$ and $k\in\{1,\ldots,2^{n-p}\}$.
\item $\psi_2$ is a surjection from $\Delta_{2^{n}}$ to $\Delta_{2^{n-p}}$ such that $\psi_2(\delta_{2^{n}}^{\lambda_{jk}})=\delta_{2^{n-p}}^{k}$, where $\delta_{2^{n}}^{\lambda_{jk}}\in\Lambda_{j},j\in\{1,\ldots,2^{p}\}$ and $k\in\{1,\ldots,2^{n-p}\}$.
\end{enumerate}
These three mappings can be expressed as $\psi(x)=Px,\psi_1(x)=(H_1\ast\cdots\ast H_p)x$ and $\psi_2(x)=P_{p+1}x$, where $P\in\mathcal{L}_{2^{n}\times 2^{n}}$ is nonsingular, $P_{p+1}\in\mathcal{L}_{2^{n-p}\times 2^{n}}$.
For each $j\in\{1,\ldots,2^{p}\}$ and $k\in\{1,\ldots,2^{n-p}\}$, one has
\[
\begin{array}{rcl}
\psi(\delta_{2^{n}}^{\lambda_{jk}})&=&\delta_{2^{p}}^{j}\ltimes\delta_{2^{n-p}}^{k}\\
\ &=&\psi_{1}(\delta_{2^{n}}^{\lambda_{jk}})\ltimes\psi_{2}(\delta_{2^{n}}^{\lambda_{jk}})\\
\ &=&((H_{1}\ast\cdots\ast H_{p})\delta_{2^{n}}^{\lambda_{jk}})\ltimes(P_{p+1}\delta_{2^{n}}^{\lambda_{jk}})\\
\ &=&(H_{1}\ast\cdots\ast H_{p}\ast P_{p+1})\delta_{2^{n}}^{\lambda_{jk}}.\\
\end{array}
\]
Due to the arbitrariness of $j$ and $k$,
we determine that $P=H_1\ast\cdots\ast H_p\ast P_{p+1}$ is nonsingular.
Note that for each index $i\in\{1,\ldots,p\}$, the Boolean matrix $\widehat{\Xi}^{i}$ can be determined, and an explicit mapping between $y_{i}(t+1)$ and $v_{i}(t),y_{i}(t)$ can be given via
\[
y_{i}(t+1)=[(\delta_{2}^{1})^{\top}\widehat{\Xi}^{i}\ (\delta_{2}^{2})^{\top}\widehat{\Xi}^{i}]y_{i}(t)v_{i}(t).
\]
For each index $i\in\{1,\ldots,p\}$, letting $z^{i}(t)=y_i(t)$, one has
\[
\left\{
\begin{array}{l}
z^{i}(t+1)=[(\delta_{2}^{1})^{\top}\widehat{\Xi}^{i}\ (\delta_{2}^{2})^{\top}\widehat{\Xi}^{i}]W_{[2,2]}v_{i}(t)z^{i}(t),\\
y_{i}(t)=z^{i}(t).\\
\end{array}
\right.
\]
Letting $z^{p+1}(t)=P_{p+1}x(t)$, one has $z^{p+1}(t+1)=P_{p+1}\widehat{L}(I_{2^{p}}\otimes P^{\top})v(t)z(t)$, leading to the IO-decomposed form (\ref{eq4.1}).
\end{proof}

The state feedback control design approaches proposed in \cite{Fushihua2017,Panjinfeng2018} rely on the IO-decomposed form (\ref{e4.2}) which may not exist for a certain class of systems. In contrast, our proposed approach does not rely on the existence of the IO-decomposed form, but if such a form exists, it is obtained in a straightforward manner as stated in Proposition \ref{pro4.4}.

\section{Illustrative examples}\label{sec5}
In this section, two examples are provided to show the effectiveness of the proposed approach.

\begin{example}\label{ex1}\rm
Consider the following BCN given in \cite{Liyifeng2022a}, with 2 input nodes, 2 output nodes and 3 state nodes:
\begin{equation}\label{e4}
\left\{
\begin{array}{l}
x(t+1)=Lu(t)x(t),\\
y_{1}(t)=H_{1}x(t),\\
y_{2}(t)=H_{2}x(t),\\
\end{array}
\right.
\end{equation}
where $L=\delta_{8}[3\ 4\ 1\ 2\ 3\ 4\ 3\ 4\ 2\ 1\ 4\ 3\ 4\ 3\ 4\ 3\ 5\ 6\ 5\ 6\ 7\ 8\ 5\ 6\ 6\ 5\ 6$
$5\ 8\ 7\ 6\ 5]$, $H_{1}=\delta_{2}[1\ 1\ 1\ 1\ 2\ 2\ 2\ 2]$ and $H_{2}=\delta_{2}[1\ 2\ 1\ 2\ 1\ 2$
$1\ 2]$.
According to (\ref{e3}), $\Xi^{1}$ and $\Xi^2$ are computed as follows:
\[
\Xi^{1}=\left[
\begin{array}{ll}
\delta_{2}^{1}&\delta_{2}^{2}\\
\delta_{2}^{1}&\delta_{2}^{2}\\
\end{array}
\right]_{,}\
\Xi^{2}=\left[
\begin{array}{ll}
\delta_{2}^{1}&\delta_{2}^{2}\\
\delta_{2}^{2}&\delta_{2}^{1}\\
\end{array}
\right]_{.}\
\]
Due to the fact that $(\delta_2^l)^{\top}\Xi^i\mathbf{1}_{2}=\mathbf{1}_2,i,l=1,2$, BCN (\ref{e4}) is one-step transition IO-decoupled as per Theorem \ref{th1}.
The mapping between $y_{i}(t+1)$ and $u_{i}(t),y_{i}(t)$ can be obtained as follows:
\[
\begin{array}{l}
y_{1}(t+1)=\delta_{2}[1\ 2\ 1\ 2]y_{1}(t)u_{1}(t),\\
y_{2}(t+1)=\delta_{2}[1\ 2\ 2\ 1]y_{2}(t)u_{2}(t).\\
\end{array}\]
Let $z_{i}(t)=y_{i}(t),i=1,2$. Then the one-step transition IO-decoupled canonical form of BCN \eqref{e4} is as follows:
\[
\left\{
\begin{array}{l}
z_{1}(t+1)=u_{1}(t),\\
z_{2}(t+1)=\delta_{2}[1\ 2\ 2\ 1]u_{2}(t)z_{2}(t),\\
y_{i}(t)=z_{i}(t),i=1,2.\\
\end{array}
\right.
\]

We simulated system \eqref{e4} with different input sequences and initial state $x(0)\in\Delta_{8}$, and verified that for each index $i=1,2$, the output $y_i(t)$ varies with the input $u_i(t)$ when $u_j(t),j\neq i$ remains unchanged. For the sake of clarity and due to space limitation, we only show two sets of simulation results with initial states $x(0)=\delta_{8}^{5}$ in Figure \ref{Fig5.1}; one with $\{u_1(t)=\delta_{2}^{1}\}_{t=0}^{10}$ and a randomly selected input sequence $u_2$, the other one with $\{u_2(t)=\delta_{2}^{1}\}_{t=0}^{10}$, and a randomly selected input sequence $u_1$. The simulation results illustrate that the given BCN is one-step transition IO-decoupled.

\begin{figure}[h!]
     \centering
     \subfloat[]{\includegraphics[width=0.53\linewidth,height=0.5\linewidth,keepaspectratio]{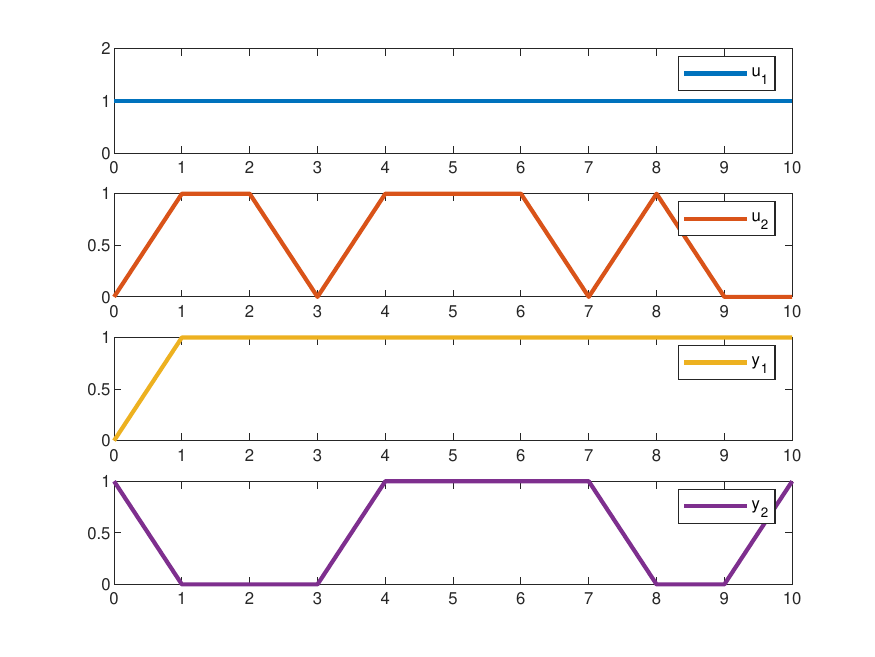}\label{Fig5.1a}}\hspace*{\fill}
    \subfloat[]{\includegraphics[width=0.53\linewidth,height=0.5\linewidth,keepaspectratio]{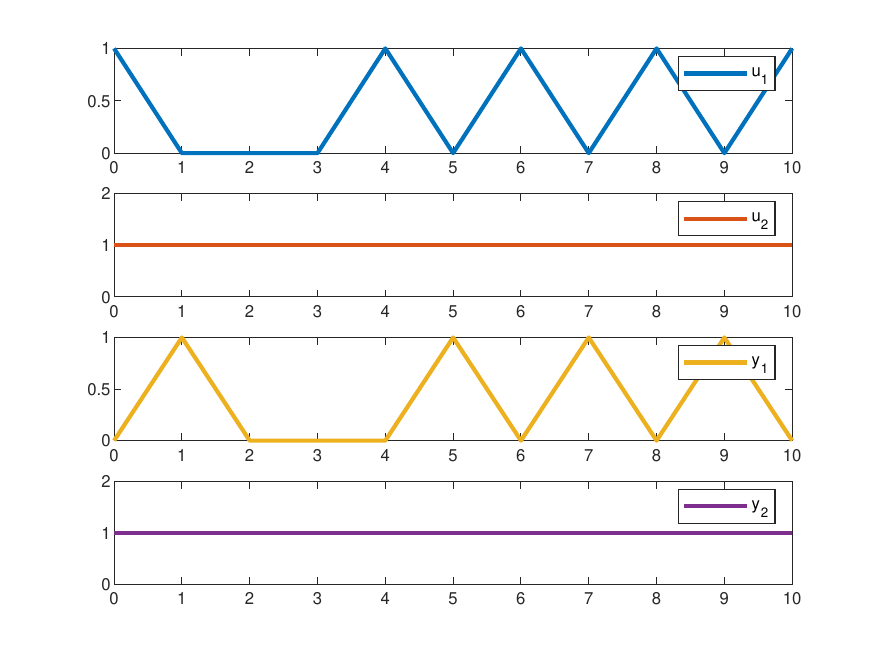}\label{Fig5.1b}}
     \caption{The evolution trajectories of $y_{1}(t)$ and $y_2(t)$ generated by a given initial state $x(0)=\delta_{8}^{5}$}
     \label{Fig5.1}
\end{figure}

Due to the fact that $(H_{1}\ast H_{2})\mathbf{1}_{8}=2\mathbf{1}_{4}$, an IO-decomposed form of BCN (\ref{e4}) exists.
Letting $z_{i}(t)=y_{i}(t)=H_{i}x(t),i=1,2$ and $z_{3}(t)=P_{3}x(t)$, one has $z(t)=Px(t)$, where $z(t)=\ltimes_{i=1}^{3}z_{i}(t)$ and $P=H_{1}\ast H_{2}\ast P_{3}$.
For $j\in\{1,\ldots,4\}$, defining $\Lambda_j$ as $\Lambda_j=\{x|(H_1\ast H_2)x=\delta_{4}^{j}\}$,
one obtains $\Lambda_{1}=\{\delta_{8}^{1},\delta_{8}^{3}\},\Lambda_{2}=\{\delta_{8}^{2},\delta_{8}^{4}\},\Lambda_{3}=\{\delta_{8}^{5},\delta_{8}^{7}\},\Lambda_{4}=\{\delta_{8}^{6},\delta_{8}^{8}\}$. For each state $\delta_{8}^{\lambda_{jk}}\in\Lambda_{j},j\in\{1,\ldots,4\},k=1,2$, take $\mathrm{Col}_{\lambda_{jk}}(P)=\delta_{4}^{j}\ltimes\delta_{2}^{k}$.
The nonsingular matrix $P$ is $P=\delta_{8}[1\ 3\ 2\ 4\ 5\ 7\ 6\ 8]$.
Then an IO-decomposed form of BCN (\ref{e4}) can be written as
\[
\left\{
\begin{array}{l}
z_{1}(t+1)=u_{1}(t),\\
z_{2}(t+1)=\delta_{2}[1\ 2\ 2\ 1]u_{2}(t)z_{2}(t),\\
z_{3}(t+1)=P_{3}L(I_{4}\otimes P^{\top})u(t)z(t),\\
y_{i}(t)=z_{i}(t),i=1,2,\\
\end{array}
\right.
\]
where $P_{3}=\delta_{2}[1\ 1\ 2\ 2\ 1\ 1\ 2\ 2]$ and $P_{3}L(I_{4}\otimes P^{\top})=\delta_{2}[2\ 1\ 2\ 1\ 2\ 2\ 2\ 2\ 1\ 2\ 1\ 2\ 2\ 2\ 2$
$2\ 1\ 1\ 1\ 1\ 2\ 1\ 2\ 1\ 1\ 1\ 1\ 1\ 2\ 1\ 2\ 1]$.
\end{example}

\begin{example}\label{ex3}\rm
Consider the following BCN with 2 input nodes, 4 state nodes and 2 output nodes:
\begin{align}\label{e5.1}
\scriptsize
\left\{
\begin{array}{l}
X_{1}(t+1)=(X_{1}(t)\vee(X_{2}(t)\leftrightarrow X_{4}(t)))\wedge U_{2}(t),\\
X_{2}(t+1)=(\neg U_{1}(t)\wedge (U_{2}(t)\wedge (X_{2}(t)\bar{\vee}X_{4}(t))\\
\ \ \ \ \ \ \ \ \ \ \vee(\neg U_{2}(t)\wedge(\neg X_{1}(t)\vee(X_{2}(t)\bar{\vee}X_{4}(t))))))\\
\ \ \ \ \ \ \ \ \ \ \vee(U_{1}(t)\wedge(X_{2}(t)\leftrightarrow X_{4}(t))),\\
X_{3}(t+1)=((X_{1}(t)\vee(X_{2}(t)\leftrightarrow X_{4}(t)))\wedge U_{2}(t))\\
\ \ \ \ \ \ \ \ \ \ \bar{\vee}((X_{1}(t)\vee U_{2}(t))\leftrightarrow U_{1}(t)),\\
X_{4}(t+1)=(U_{1}(t)\wedge U_{2}(t)\wedge X_{1}(t)\wedge(X_{2}(t)\leftrightarrow X_{4}(t)))\\
\ \ \ \ \ \ \ \ \ \ \vee(U_{1}(t)\wedge\neg U_{2}(t)\wedge(X_{2}(t)\bar{\vee}X_{4}(t)))\\
\ \ \ \ \ \ \ \ \ \ \vee(\neg U_{1}(t)\wedge U_{2}(t)\wedge(\neg X_{1}(t)\vee(X_{2}(t)\bar{\vee}X_{4}(t))))\\
\ \ \ \ \ \ \ \ \ \ \vee(\neg U_{1}(t)\wedge\neg U_{2}(t)\wedge(\neg X_{1}(t)\vee(X_{2}(t)\leftrightarrow X_{4}(t)))),\\
Y_{1}(t)=(X_{1}(t)\wedge X_{2}(t)\wedge\neg X_{3}(t)\wedge\neg X_{4}(t))\\
\ \ \ \ \ \ \ \ \ \ \vee(\neg X_{1}(t)\wedge\neg X_{2}(t))\vee(\neg X_{1}(t)\wedge\neg X_{4}(t)),\\
Y_{2}(t)=X_{1}(t)\bar{\vee}X_{3}(t),\\
\end{array}
\right.
\end{align}
where $U_{i},X_{j},Y_{k}\in\mathcal{D},i,k\in\{1,2\},j\in\{1,2,3,4\}$ and $\neg,\vee,\wedge,\leftrightarrow,\bar{\vee}$ are logical operators.
Let $u_{i}=\varphi(U_{i}),x_{j}=\varphi(X_{j})$ and $y_{k}=\varphi(Y_{k})$.
Then one has $u=u_{1}\ltimes u_{2},x=\ltimes_{i=1}^{4}x_{i}$ and $y=y_{1}\ltimes y_{2}$.
According to Lemma \ref{le1} and Lemma \ref{le2}, the algebraic form of BCN (\ref{e5.1}) is as follows:
\begin{align}\label{e5.2}
\left\{
\begin{array}{l}
x(t+1)=Lu(t)x(t),\\
y_{1}(t)=H_{1}x(t),\\
y_{2}(t)=H_{2}x(t),\\
\end{array}
\right.
\end{align}
where
\[
\small
\begin{array}{cccccccc}
L=\delta_{16}[3&8&3&8&8&3&8&3\\
\ \ \ \ \ \ \ \ \ \ \ 4&14&4&14&14&4&14&4\\
\ \ \ \ \ \ \ \ \ \ \ 10&13&10&13&13&10&13&10\\
\ \ \ \ \ \ \ \ \ \ \ 12&15&12&15&15&12&15&12\\
\ \ \ \ \ \ \ \ \ \ \ 6&1&6&1&1&6&1&6\\
\ \ \ \ \ \ \ \ \ \ \ 5&11&5&11&11&5&11&5\\
\ \ \ \ \ \ \ \ \ \ \ 15&12&15&12&12&15&12&15\\
\ \ \ \ \ \ \ \ \ \ \ 9&9&9&9&9&9&9&9],\\
H_{1}=\delta_{2}[2&2&2&1&2&2&2&2\\
\ \ \ \ \ \ \ \ \ \ \ 2&1&2&1&1&1&1&1],\\
H_{2}=\delta_{2}[2&2&1&1&2&2&1&1\\
\ \ \ \ \ \ \ \ \ \ \ 1&1&2&2&1&1&2&2].\\
\end{array}
\]
It is obtained from $H_{1}$ and $H_{2}$ that $\Gamma_{11}=\{\delta_{16}^{4},\delta_{16}^{10},\delta_{16}^{12},$
$\delta_{16}^{13},\delta_{16}^{14},\delta_{16}^{15},\delta_{16}^{16}\},\Gamma_{12}=\{\delta_{16}^{1},\delta_{16}^{2},\delta_{16}^{3},\delta_{16}^{5},\delta_{16}^{6},\delta_{16}^{7},\delta_{16}^{8},\delta_{16}^{9},$
$\delta_{16}^{11}\},\Gamma_{21}=\{\delta_{16}^{3},\delta_{16}^{4},\delta_{16}^{7},\delta_{16}^{8},\delta_{16}^{9},\delta_{16}^{10},\delta_{16}^{13},\delta_{16}^{14}\},\Gamma_{22}=\{\delta_{16}^{1},$
$\delta_{16}^{2},\delta_{16}^{5},\delta_{16}^{6},\delta_{16}^{11},\delta_{16}^{12},\delta_{16}^{15},\delta_{16}^{16}\}$. Since  \[\mathrm{sgn}(\sum\limits_{\delta_{16}^{\alpha}\in\Gamma_{ij}}H_{i}LV(\Omega_{ik})\delta_{16}^{\alpha})=\mathbf{1}_{2}\] holds for $i,j,k\in\{1,2\}$, one has
\[
\Xi^{i}=\left[
\begin{array}{cc}
\mathbf{1}_{2}&\mathbf{1}_{2}\\
\mathbf{1}_{2}&\mathbf{1}_{2}\\
\end{array}
\right]_{,}
\]
which results in $(\delta_{2}^{l})\Xi^{i}\mathbf{1}_{2}=2\mathbf{1}_{2},i,l\in\{1,2\}$.
BCN (\ref{e5.2}) is not one-step transition IO-decoupled as per Theorem \ref{th1}.

Figure \ref{Fig5.2} shows two sets of simulation results with initial state $x(0)=\delta_{16}^{6}$; one with $\{u_1(t)=\delta_{2}^{1}\}_{t=0}^{10}$ and a randomly selected input sequence $u_2$, the other one with $\{u_2(t)=\delta_{2}^{1}\}_{t=0}^{10}$ and a randomly selected input sequence $u_1$. One can clearly see that the output $y_2$ generated by $x(0)=\delta_{16}^{6}$ varies with both inputs $u_1$ and $u_2$, which confirms that BCN (\ref{e5.2}) is not one-step transition IO-decoupled.

\begin{figure}[h!]
     \centering
    \subfloat[]{\includegraphics[width=0.53\linewidth,height=0.5\linewidth,keepaspectratio]{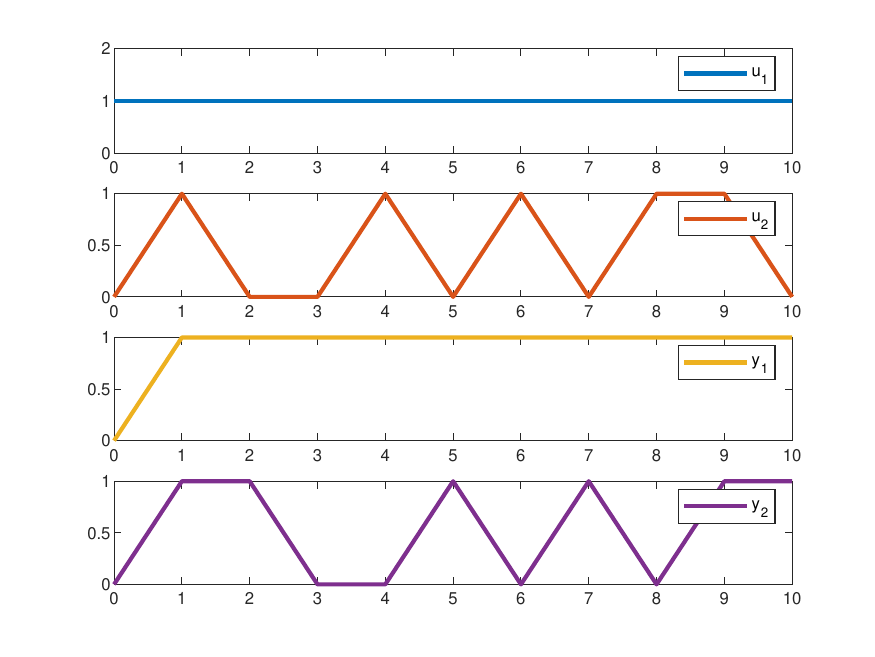}\label{Fig5.2a}}\hspace*{\fill}
    \subfloat[]{\includegraphics[width=0.53\linewidth,height=0.5\linewidth,keepaspectratio]{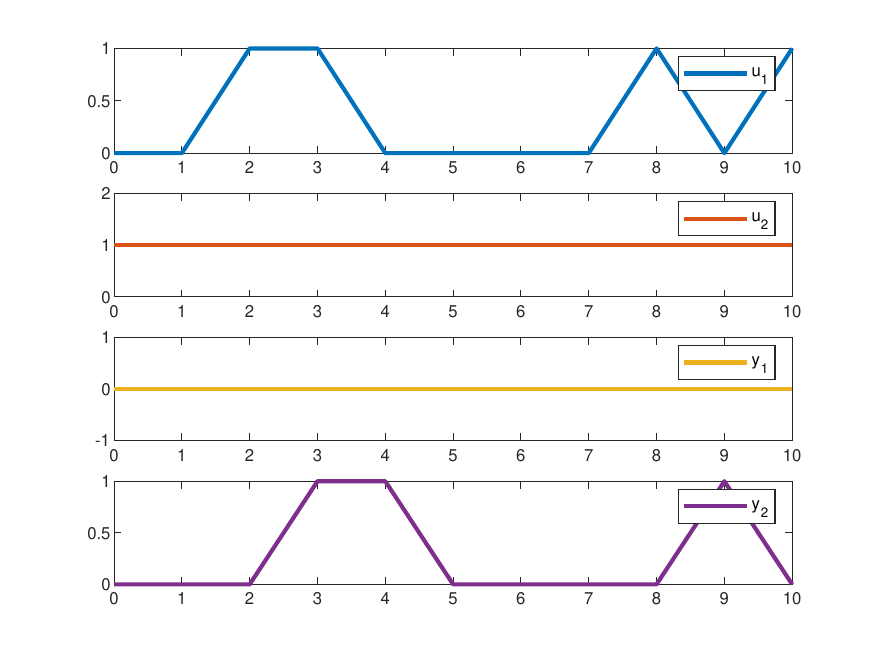}\label{Fig5.2b}}
     \caption{The evolution trajectories of $y_{1}(t)$ and $y_2(t)$ generated by a given initial state $x(0)=\delta_{16}^{6}$}
\label{Fig5.2}
\end{figure}

In the following, we use the proposed approach to design a state feedback control to make BCN (\ref{e5.2}) one-step transition IO-decoupled.
Auxiliary matrices $\hat{K}_{1}$ and $\hat{K}_{2}$ are computed according to (\ref{e3.2.4}), \ie,
\[
\begin{array}{l}
\hat{K}_{1}\delta_{16}^{i}=\delta_{2}[2\ 1\ 2\ 1],i\in\{1,\ldots,8\},\\
\hat{K}_{1}\delta_{16}^{i}=\delta_{2}[1\ 1\ 2\ 2],i\in\{9,\ldots,16\},\\
\hat{K}_{2}\delta_{16}^{i}=\delta_{2}[1\ 1\ 2\ 2],i\in\{1,\ldots,8\},\\
\hat{K}_{2}\delta_{16}^{i}=\delta_{2}[1\ 2\ 2\ 1],i\in\{9,\ldots,16\}.\\
\end{array}
\]
It is easy to notice that $\mathrm{sgn}(\hat{K}_{i}\delta_{16}^{\alpha}\mathbf{1}_{4})=\mathbf{1}_{2}$ holds for $i\in\{1,2\}$ and $\alpha\in\{1,\ldots,16\}$.
Thus, we obtain $\hat{K}$, where
\[
\begin{array}{l}
\hat{K}\delta_{16}^{i}=\delta_{4}[2\ 4\ 1\ 3],i\in\{1,\ldots,8\},\\
\hat{K}\delta_{16}^{i}=\delta_{4}[1\ 2\ 4\ 3],i\in\{9,\ldots,16\}.
\end{array}
\]
Due to the fact that $\hat{K}\in\mathcal{L}_{4\times 16}$, the state feedback matrix is designed as $K=\hat{K}$.
Substituting $u(t)=Kx(t)v(t)$ into BCN (\ref{e5.2}), the resulting closed-loop system is obtained
\begin{equation}\label{e5.3}
\left\{
\begin{array}{l}
x(t+1)=\hat{L}v(t)x(t),\\
y_{1}(t)=H_{1}x(t),\\
y_{2}(t)=H_{2}x(t),\\
\end{array}
\right.
\end{equation}
where $\hat{L}$ is computed via $\hat{L}=LKW_{[2^{n},2^{n+m}]}\Phi$ and $\Phi=\mathrm{diag}\{\delta_{16}^{1},\ldots,\delta_{16}^{16}\}$, \ie,
\[
\small
\begin{array}{cccccccc}
\hat{L}=\delta_{16}[10&13&10&13&13&10&13&10\\
\ \ \ \ \ \ \ \ \ \ 4&14&4&14&14&4&14&4\\
\ \ \ \ \ \ \ \ \ \ 15&12&15&12&12&15&12&15\\
\ \ \ \ \ \ \ \ \ \ 12&15&12&15&15&12&15&12\\
\ \ \ \ \ \ \ \ \ \ 3&8&3&8&8&3&8&3\\
\ \ \ \ \ \ \ \ \ \ 9&9&9&9&9&9&9&9\\
\ \ \ \ \ \ \ \ \ \ 6&1&6&1&1&6&1&6\\
\ \ \ \ \ \ \ \ \ \ 5&11&5&11&11&5&11&5].\\
\end{array}
\]
Then the Boolean matrices $\widehat{\Xi}^{1}$ and $\widehat{\Xi}^{2}$ given by
\[
\widehat{\Xi}^{i}=\left[
\begin{array}{cc}
\delta_{2}^{1}&\delta_{2}^{2}\\
\delta_{2}^{1}&\delta_{2}^{2}\\
\end{array}
\right]_{,}
\]
where $i\in\{1,2\}$.
The resulting closed-loop system (\ref{e5.3}) is one-step transition IO-decoupled as per Theorem \ref{th1}.
The mapping between $y_{i}(t+1)$ and $v_{i}(t),y_{i}(t)$ can be depicted via
\[
y_{i}(t+1)=\delta_{2}[1\ 2\ 1\ 2]y_{i}(t)v_{i}(t),
\]
where $i=1,2$.
Let $z_i(t)=y_i(t),i=1,2$. Then a one-step transition IO-decoupled canonical form of the resulting closed-loop system \eqref{e5.3} is as follows:
\begin{equation}
\left\{
\begin{array}{l}
z_{i}(t+1)=v_{i}(t),\\
y_{i}(t)=z_{i}(t),\\
\end{array}
\right.
\end{equation}
where $i=1,2$.
Moreover, due to the fact that
\[(H_{1}\ast H_{2})\mathbf{1}_{16}=[4\ 3\ 4\ 5]^{\top}\neq 4\mathbf{1}_{4},\]
the resulting closed-loop system (\ref{e5.3}) does not have an IO-decomposed form as per Theorem \ref{th5}.

For each initial state $x(0)\in\Delta_{16}$, we tested different input sequences and found that for each index $i=1,2$, the output $y_i(t)$ varies with the input $v_i(t)$ when $v_j(t),j\neq i$ remains unchanged. Figure \ref{Fig5.3a} shows the simulation results obtained with $x(0)=\delta_{16}^{6}$, $\{v_1(t)=\delta_{2}^{1}\}_{t=0}^{10}$ and a randomly selected input sequence $v_2$.  Figure \ref{Fig5.3b} shows the simulation results obtained with $x(0)=\delta_{16}^{6}$, $\{v_2(t)=\delta_{2}^{1}\}_{t=0}^{10}$, and a randomly selected input sequence $v_1$. The results clearly show that resulting closed-loop system (\ref{e5.3}) is one-step transition IO-decoupled.

\begin{figure}[h!]
    \centering
     \subfloat[]{\includegraphics[width=0.53\linewidth,height=0.5\linewidth,keepaspectratio]{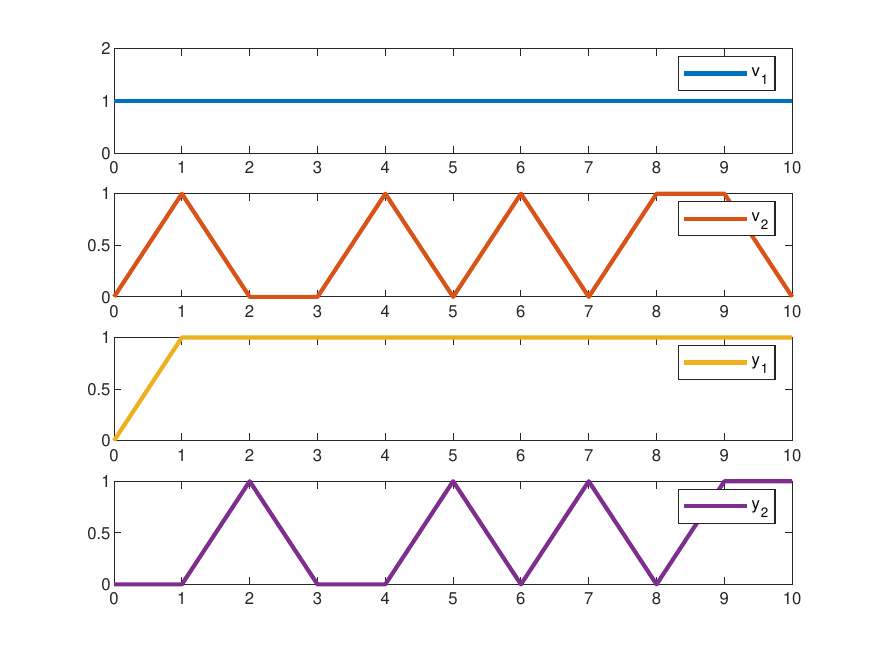}\label{Fig5.3a}}\hspace*{\fill}
     \subfloat[]{\includegraphics[width=0.53\linewidth,height=0.5\linewidth,keepaspectratio]{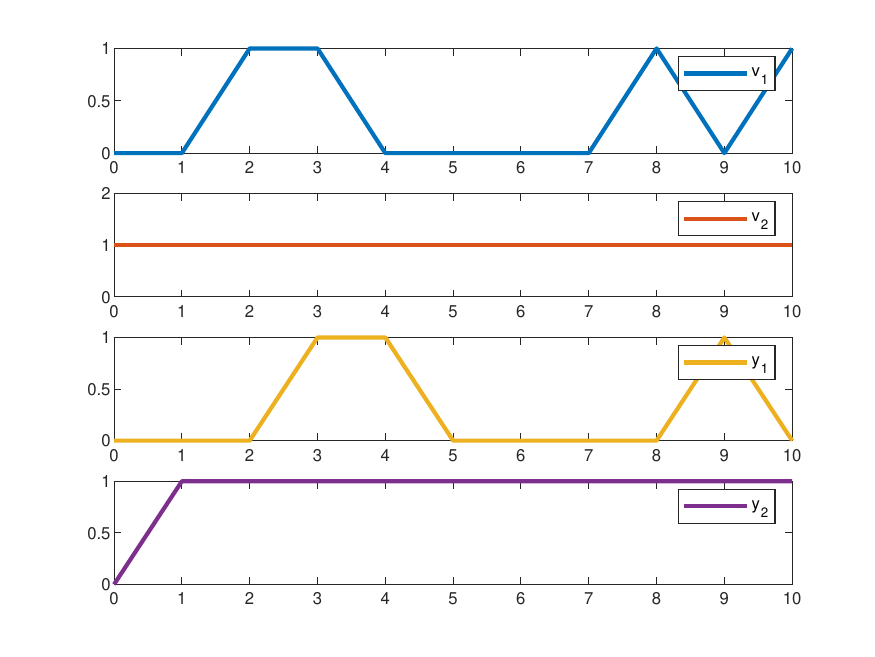}\label{Fig5.3b}}
    \caption{The evolution trajectories of $y_{1}(t)$ and $y_2(t)$ generated by a given initial state $x(0)=\delta_{16}^{6}$}
\label{Fig5.3}
\end{figure}
\end{example}

\section{Conclusion}\label{sec6}
Some necessary and sufficient conditions for one-step transition IO-decoupling of BCNs via state feedback have been derived.
As an instrumental tool for our IO-decoupling state feedback control design, an explicit mapping between outputs and inputs, referred to as the one-step transition decoupled canonical form, has been established. In contrast to the existing approaches in the literature, such as those in \cite{Fushihua2017,Panjinfeng2018}, our approach is viable even when the BCN does not admit an IO-decomposed form. It is also shown that our approach allows to obtain, in a straightforward manner the IO-decomposed form if it exists. Our proposed approach has also been put in perspective, with respect to the existing IO-decoupling definitions in the literature, and thorough discussion is provided including some important new and rigorously proven results.

Although the proposed approach focuses on BCNs with bi-valued nodes (\ie, each node takes a value of either zero or one), the obtained results could be generalized to multi-valued logical control networks (\ie, each node has $k,k>2$ values).
The IO-decoupling problem of multi-valued stochastic logical control networks is an interesting problem that would be part of our future investigations.

It is worth poiting out that the proposed approach has an exponential computational complexity caused by the algebraic form of BCNs involving $2^{n}$ dimensional state $x(t)$ associated to the $n$ dimensional state $X(t)$. Note also that many problems involving BCNs, such as stability and observability, are NP-hard. Consequently, at this point in time and with the available tools, the high computational complexity seems to be intrinsic to BCNs. In our future investigations, we will explore the possibility of developing new tools and techniques to reduce the aforementioned computational complexity.


\end{document}